\DeclareMathOperator{\Dom}{Dom}      
\DeclareMathOperator{\End}{End}      
\DeclareMathOperator{\Diff}{Diff}           
\DeclareMathOperator{\Id}{Id}                 
\DeclareMathOperator{\Res}{Res}         
\DeclareMathOperator{\Tad}{Tad}          
\DeclareMathOperator{\Tr}{Tr}                 
\newtheorem{assumption}{Assumption}[section]
\newtheorem{theorem}[assumption]{Theorem}
\newtheorem{lemma}[assumption]{Lemma}
\newtheorem{definition}[assumption]{Definition}
\newtheorem{prop}[assumption]{Proposition}
\newtheorem{remark}[assumption]{Remark}
\renewcommand{\th}{\theta}
\newcommand{\A}{\mathcal{A}}              
\renewcommand{\a}{\alpha}                    
\newcommand{\B}{\mathcal{B}}              
\newcommand{\C}{\mathbb{C}}              
\newcommand{\del}{\partial}                    
\newcommand{\DD}{\mathcal{D}}           
\newcommand{\eps}{\varepsilon}            
\newcommand{\F}{\mathcal{F}}                
\newcommand{\ga}{\gamma}                   
\renewcommand{\H}{\mathcal{H}}           
\newcommand{\half}{{\mathchoice{\thalf}{\thalf}{\shalf}{\shalf}}}
\newcommand{\la}{\lambda}                   
\newcommand{\<}{\langle}       
\newcommand{\N}{\mathbb{N}}             
\newcommand{\om}{\omega}                 
\newcommand{\ol}{\overline}                  
\newcommand{\OO}{\mathcal{O}}          
\newcommand{\R}{\mathbb{R}}               
\newcommand{\set}[1]{\{\,#1\,\}}               
\newcommand{\shalf}{{\scriptstyle\frac{1}{2}}} 
\renewcommand{\SS}{\mathcal{S}}        
\newcommand{\thalf}{\tfrac{1}{2}}            
\newcommand{\Afr}{\mathfrak{A}}           
\newcommand{\wh}{\widehat}                  
\newcommand{\wt}{\widetilde}                 
\newcommand{\Z}{\mathbb{Z}}                 
\def\<#1,#2>{\langle#1\,,\,#2\rangle}      
\newcommand{\be}{\begin{enumerate}}
\newcommand{\ee}{\end{enumerate}}
\newbox\ncintdbox \newbox\ncinttbox
\newcommand{\ncint}{\mathop{\mathchoice{\copy\ncintdbox}
    {\copy\ncinttbox}{\copy\ncinttbox}
    {\copy\ncinttbox}}\nolimits}
\newcommand{\sg}{\sigma}                              
\newcommand{\twobytwo}[4]{\begin{pmatrix}#1 & #2 \\ #3 & #4
                            \end{pmatrix}} 
\begin{document}

\thispagestyle{empty}

\begin{center} 

CENTRE DE PHYSIQUE TH\'EORIQUE$\,^1$\\
CNRS--Luminy, Case 907\\
13288 Marseille Cedex 9\\
FRANCE\\

\vspace{3cm}

{\Large\textbf{Spectral triples and manifolds with boundary}} \\
\vspace{0.5cm}

{\large  B. Iochum$^{1, 2}$, C. Levy$^{3}$} \\

\vspace{1.5cm}

{\large\textbf{Abstract}}
\end{center}
We investigate manifolds with boundary in noncommutative geometry. Spectral triples associated to a 
symmetric differential operator and a local boundary condition are constructed. We show that there is no tadpole for classical Dirac operators with a chiral boundary condition on certain manifolds.
\begin{quote}

\end{quote}

\vspace{2cm}

\noindent
PACS numbers: 11.10.Nx, 02.30.Sa, 11.15.Kc

MSC--2000 classes: 46H35, 46L52, 58B34

CPT-P003-2010

\vspace{4cm}

{\small
\noindent $^1$ UMR 6207

-- Unit\'e Mixte de Recherche du CNRS et des
Universit\'es Aix-Marseille I, Aix-Marseille II et de l'Universit\'e
du Sud Toulon-Var (Aix-Marseille Université)

-- Laboratoire affili\'e \`a la FRUMAM -- FR 2291\\
$^2$ Also at Universit\'e de Provence,
iochum@cpt.univ-mrs.fr\\
$^3$ University of Copenhagen, Department of mathematical sciences, Universitetsparken 5, DK-2100 
Copenhagen, Denmark, levy@math.ku.dk
}

\newpage


\section{Introduction}

Noncommutative geometry has, in its numerous motivations, the conceptual 
understanding of different aspects of physics \cite{Book,ConnesMarcolli}. In particular, the spectral 
approach which is deeply encoded  
in the notion of a spectral triple, is not only motivated by the algebra of quantum observables $\A$ acting 
on the Hilbert space $\H$ of physical states, but also by classical physics like general relativity. For 
instance, a Riemannian compact spin manifold can be reconstructed only via properties of a commutative 
spectral triple $(\A,\H,\DD)$ \cite{CReconstruction} where the last piece $\DD$ is a selfadjoint operator 
acting on $\H$ playing the role of a Dirac operator which can fluctuates: $\DD$ is then replaced by 
$\DD_A:=\DD+A$ where $A$ is a selfadjoint one-form.

The spectral action $\SS$ of Chamseddine--Connes \cite{CC} associated to a triple $(\A,\H,\DD)$ is the 
trace of $\Phi(\DD_A^2/\Lambda^2)$ where $\Phi$ is a positive function and $\Lambda$ plays the role of a 
cut-off. This can be written (under some conditions on the spectrum) as a series of noncommutative 
integrals 
\begin{align}
   \label{formuleaction}
	\SS(\DD_{A},\Phi,\Lambda) \, = \,\sum_{k\in Sd^+} \Phi_{k}\,
	\Lambda^{k} \ncint \vert \DD_{A}\vert^{-k} + \Phi(0) \,
	\zeta_{\DD_{A}}(0) +\mathcal{O}_{\Lambda\to \infty}(\Lambda^{-1})
\end{align}
where $\Phi_{k}=\half\int_{0}^{\infty} \Phi(t) \, t^{k/2-1} \, dt$, $Sd^+$ is the strictly positive part of the 
dimension spectrum of the spectral triple and the noncommutative integral $\ncint X$ for $X$ in the 
algebra $\Psi(\A)$ of pseudodifferential operators, is defined by 
$\ncint X:=\underset{s=0}{\Res} \,\Tr \big(X \vert \DD \vert^{-s}\big)$. 

Since $\ncint$ is a trace on $\Psi(\A)$ (non necessarily positive), it coincides (up to a scalar) 
with the Wodzicki residue \cite{Wodzicki1,Wodzicki3} in the case of a commutative geometry 
where $\A$ is the algebra of $C^\infty$ functions on a manifold $M$ without boundary:
in a chosen coordinate system and local trivialization $(x,\xi)$ of $T^*M$, this residue is
\begin{align*}
Wres(X):=\int_M \,\int_{S_x^*M} \Tr\big(\sigma_{-d}^X\,(x,\xi)\big)\,b\vert d\xi \vert\,\vert dx\vert,
\end{align*}
where $\sigma_{-d}^X$ is the symbol of the classical pseudodifferential operator $X$ which is 
homogeneous of degree $-d:=-\text{dim}(M)$, $d\xi$ is the normalized restriction of the volume form to the 
unit sphere $S_x^*M \simeq \mathbb{S}^{d-1}$. 
The Dixmier's trace $\Tr_\omega$ \cite{Dix} concerns compact 
operators $X$ with singular values $\set{\mu_k}$ satisfying $\sup_{N\rightarrow \infty} \, a_N < \infty$ 
where $a_N=\log(N)^{-1}\sum_{k=1}^N \mu_k$ and $\Tr_{\omega}(X)$ is defined after a choice of an 
averaging  procedure $\omega$ such that $\Tr_{\omega}(X)=\lim_N \, a_N$ when $a_N$ converges.
As shown in \cite{Connesaction}, $\ncint$ coincides (still up to a universal scalar) with $\Tr_\omega$ 
when $X$ has order $-d$.

When $M$ has a boundary, the choice of an appropriate differential calculus is delicate. In the 
noncommutative framework, the links between Boutet de Monvel's algebra, Wodzicki's residue, 
Dixmier's trace or Kontsevich--Vishik's trace \cite{KV} have been clarified \cite{FGLS, Schrohe, GSc, NS, 
ANS, GSc1} including the case of log-polyhomogeneous symbols \cite{Lesch1}.

From a physics point of view, applications of noncommutative integrals on manifolds to classical gravity 
has begun with Connes' remark that $\ncint \DD^{-2}$ coincides in dimension 4 with Einstein--Hilbert 
action, a fact recovered in \cite{Kastler, KW}. Then, a generalization to manifolds with boundaries was 
proposed in \cite{Wang1,Wang2,Wang3,Uga}. From the quantum side, a noncommutative approach of the  
unit disk is proposed in \cite{CKW}.

\vspace{0.3cm}
Nevertheless, a construction of spectral triples in presence of boundary is not an easy task, although a 
general approach of boundary spectral triples has been announced in \cite{Connesbord}. 
 
First examples appear with isolated conical singularity in \cite{Lescure}, a work related to some extend to 
\cite{Schrohe1,Lesch} when the spectrum dimension is computed. The difficulty is to find an 
appropriate boundary condition which preserves not only the selfadjointness of the realization of $\DD$ 
but also the ellipticity. A special case of boundary also appears in non-compact manifolds when one 
restricts the operator $\DD$ to a bounded closed region: for instance, this trick was used in 
\cite{Rennie,Yang}. 

The choice of a chiral boundary condition, already considered in \cite{BG2} for mathematical reasons, is 
preferred in \cite{CC2} for physical reasons: firstly, it is consistent with a selfadjoint and elliptic 
realization, and secondly, it is a local boundary condition contrary to the standard 
APS' one which is global \cite{APS}. Thirdly, it gives a similar ratio and signs for the second term of the 
spectral action (the first one being the cosmological constant), namely the scalar curvature of the manifold 
and the extrinsic curvature of the boundary, as in the Euclidean action used in gravitation \cite{HH}. Since 
there are a lot of possible choices, this last consideration deserves attention.

\vspace{0.3cm}
Here, we first show a construction for manifolds with boundary that actually produces a spectral 
triple, and then give conditions on the algebra of functions on that manifold to get a regular triple (remark 
that the spectral action has only been computed, until now, for spectral triples which are regular). 
While in field theory, the one-loop calculation divergences, anomalies and different asymptotics of the 
effective action are directly obtained from the heat kernel method \cite{Gilkey,Gilkey2,Vassi}, we try to 
avoid this perturbation approach already used in \cite{Tadpole} to prove that there are no tadpoles when 
a reality operator $J$ exists. 
Tadpoles are the $A$-linear terms in \eqref{formuleaction}, like for example $\ncint A\DD^{-1}$. In quantum field
theory, $\DD^{-1}$ is the Feynman propagator and $A\DD^{-1}$ is a
one-loop graph with fermionic internal line and only one external
bosonic line $A$ looking like a tadpole. 


In section \ref{Regularity}, we derive a technical result on regularity of spectral triples which is sufficient  to 
avoid the use of Sobolev spaces of negative order. Then, we recall few basics on the realization of 
boundary pseudodifferential operators and their stability by powers using the Grubb's approach 
\cite{Grubb}. In section \ref{SpectralTriple}, we define an algebra $\A_{P_T}$ compatible with the 
realization of an elliptic pseudodifferential boundary system $\set{P,T}$. A condition on $P$ is given 
which guarantees the regularity of the associated spectral triple. The motivating example of a classical 
Dirac operator is considered in section \ref{CasDirac}. Moreover, the construction of a spectral triple on 
the boundary is revisited in section \ref{tripletbord}. Section \ref{Reality} is devoted to a reality operator 
$J$ on a spectral triple with boundary and some consequences on the tadpoles like $\ncint A \DD^{-1}$ 
which can appear in spectral action.

\section{Regularity}
\label{Regularity}

Let $\N$ be the non-negative integers and $\B(\H)$ be the set of bounded operator on a separable Hilbert space 
$\H$. 

We shall use the following definition of a spectral triple:

\begin{definition}
A spectral triple of dimension $d$ is a triple $(\A,\H,\DD)$ such that $\H$ is a Hilbert space and

\quad - $\A$ is an involutive unital algebra faithfully represented in $B(\H)$,

\quad - $\DD$ is a selfadjoint operator on $\H$ with compact resolvent and its singular values 
$(\mu_n(|\DD|))_n$ are $\OO(n^{1/d})$,

\quad - for any $a\in \A$, $a\Dom \DD \subseteq \Dom \DD$ and the commutator $[\DD,a]$ 
(with domain $\Dom \DD$) as an extension in $B(\H)$ denoted $da$.
\end{definition}

Note that $\Dom |\DD| =\Dom \DD$. We set $\delta(T):={ \overline{[|\DD|,T]}}$, where $\overline{A}$ is the closure 
of the operator $A$, with domain 
\begin{align*}
\Dom
\delta:=\set{T\in \B(\H) \, : \  &T\Dom \DD \subseteq \Dom \DD \text{ and }
[|\DD|,T] \text{ has closure in } \B(\H)}.
\end{align*}

\begin{definition}
A spectral triple $(\A,\H,\DD)$ is said to be regular if $\A$ and $d\A$ are included in 
$\cap_{n\in \N} \Dom \delta^n$.
\end{definition}
As seen in next lemma, it is quite convenient to introduce the following

\begin{definition}
\label{Def}
Given a selfadjoint operator $P$ on a Hilbert space $\H$, let 
$$
\H_P^\infty:=\cap_{k\geq 1} \Dom |P|^k=\cap_{k\geq 1} \Dom P^k.
$$
A linear map from $\H_P^\infty$ into itself which is continuous for the topologies induced by $\H$ is said to be 
$\H_P^\infty$-bounded. 
\end{definition}

Note that $\H_P^\infty$ is a core for any power of $P$ or $|P|$. In particular it is a dense subset of 
$\H$ and $|P|$ is the closure of the essentially selfadjoint operator $|P|_{\vert\H^\infty_P}$.
Note also that for any $k\in \N$, $(1+P^2)^{k/2}$ is a bijection from $\Dom P^k$ onto $\H$, and 
thus,  $(1+P^2)^{-k/2}$ is a bijection from $\H$ onto $\Dom P^k$. As a consequence, 
for any $p\in \Z$, the operators $(1+P^2)^{p/2}$ send bijectively $\H_P^\infty$ onto itself, and
for any $k\in \N$, $|P|^k$ send $\H_P^\infty$ into itself.

Given a selfadjoint $P$, let $\delta',\, \delta_1$ be defined on operators by
$$
\delta'(T):=[|P|,T], \qquad \delta_1(T):=[P^2,T]\,(1+P^2)^{-1/2} \, 
$$
with domains 
$$
\Dom \delta' \text{ (resp.}\Dom \delta_1) :=\set{T \in \H_P^\infty{\text{-bounded operators}\, : \, 
\delta'(T) \,\big(\text{resp. }\delta_1(T)\big) \text{ is } \H_P^\infty \text{-bounded}}}.
$$ 
We record the following lemma, proven by A. Connes: 

\begin{lemma}
\label{RegularityLemma}  \cite[Lemma 13.1 and 13.2]{CReconstruction}.

(i) If $T\in \Dom \delta'$, then the bounded closure $\ol T$ of $T$ is in $\Dom \delta$ and 
$\delta(\ol T) = \ol{\delta'(T)}$. 

So, by induction, if $T\in  \cap_{n\in\N} \Dom \delta'\,^n$ then 
$\ol T \in \cap_{n\in\N} \Dom \delta^n$. 

(ii) Let $T$ be a $\H_P^\infty$-bounded operator.  
If $T\in \cap_{n\in \N}\Dom {\delta_1}^n$, then $T\in  \cap_{n\in\N} \Dom \delta'\,^n$. 

In particular the bounded closure $\ol T$ of $T$ belongs to $\cap_{n\in\N} \Dom \delta^n$.
\end{lemma}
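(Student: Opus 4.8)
The plan is to handle the two parts separately, using throughout that any $\H_P^\infty$-bounded operator, being $\H$-continuous on the dense subspace $\H_P^\infty$, extends uniquely to a bounded operator on $\H$; I denote by $\ol T$, $\ol{\delta'(T)}$ the extensions once the relevant operators are known to be $\H_P^\infty$-bounded. I also use freely that $\H_P^\infty$ is a core for $|P|$, so that $|P|$ is the closure of its restriction to $\H_P^\infty$ and, being selfadjoint, is closed.

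For (i), suppose $T\in\Dom\delta'$, so that $T$ and $\delta'(T)=[|P|,T]$ are both $\H_P^\infty$-bounded with bounded closures $\ol T$ and $\ol{\delta'(T)}$. To show $\ol T\in\Dom\delta$ I must check that $\ol T$ preserves $\Dom|P|=\Dom P$ and that $[|P|,\ol T]$ has bounded closure $\ol{\delta'(T)}$. First I would fix $\psi\in\Dom|P|$ and, by the core property, pick $\psi_n\in\H_P^\infty$ with $\psi_n\to\psi$ and $|P|\psi_n\to|P|\psi$. On $\H_P^\infty$ one has the identity
\begin{align*}
|P|\,T\psi_n=\delta'(T)\psi_n+T\,|P|\psi_n .
\end{align*}
Letting $n\to\infty$, the right-hand side converges to $\ol{\delta'(T)}\psi+\ol T\,|P|\psi$ by boundedness of $\ol T$ and $\ol{\delta'(T)}$, while $\ol T\psi_n\to\ol T\psi$; since $|P|$ is closed this forces $\ol T\psi\in\Dom|P|$ with $|P|\ol T\psi=\ol{\delta'(T)}\psi+\ol T\,|P|\psi$, i.e. $[|P|,\ol T]=\ol{\delta'(T)}$ on $\Dom|P|$. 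Hence $\ol T\in\Dom\delta$ and $\delta(\ol T)=\ol{\delta'(T)}$. The iterated statement then follows by induction: if $T\in\cap_n\Dom{\delta'}^{n}$ then so is $\delta'(T)$, and applying the above $k$ times gives $\delta^k(\ol T)=\ol{{\delta'}^{k}(T)}$, whence $\ol T\in\cap_n\Dom\delta^n$.

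For (ii) the engine is the algebraic identity $[P^2,T]=|P|\,\delta'(T)+\delta'(T)\,|P|$, giving $\delta_1(T)=\big(|P|\delta'(T)+\delta'(T)|P|\big)(1+P^2)^{-1/2}$, together with the functional-calculus representation
\begin{align*}
|P|=\frac{2}{\pi}\int_0^\infty P^2\,(P^2+\mu^2)^{-1}\,d\mu .
\end{align*}
Commuting $T$ under the integral and using $P^2(P^2+\mu^2)^{-1}=1-\mu^2(P^2+\mu^2)^{-1}$ together with $[P^2,T]=\delta_1(T)(1+P^2)^{1/2}$ gives, on $\H_P^\infty$,
\begin{align*}
\delta'(T)=\frac{2}{\pi}\int_0^\infty \mu^2\,(P^2+\mu^2)^{-1}\,\delta_1(T)\,(1+P^2)^{1/2}(P^2+\mu^2)^{-1}\,d\mu .
\end{align*}
The task is to read off $\H_P^\infty$-boundedness of $\delta'(T)$ from that of $\delta_1(T)$, and to iterate it to all powers; combined with (i) this yields $\ol T\in\cap_n\Dom\delta^n$.

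The hard part will be the convergence of this integral. The two resolvents contribute $\mu^{-2}$, but the factor $(1+P^2)^{1/2}(P^2+\mu^2)^{-1}$ is only $O(\mu^{-1})$ in norm, so the crude bound on the integrand decays like $\mu^{-1}$ and is logarithmically non-integrable at infinity. I would cure this by commuting $\delta_1(T)$ through one resolvent, using
\begin{align*}
[\delta_1(T),(P^2+\mu^2)^{-1}]=(P^2+\mu^2)^{-1}\,\delta_1^{2}(T)\,(1+P^2)^{1/2}(P^2+\mu^2)^{-1} ,
\end{align*}
which trades $\delta_1(T)$ for $\delta_1^{2}(T)$ while inserting an extra resolvent and thereby improving the decay, so that after finitely many steps the tail is absolutely convergent. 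This is precisely where the hypothesis $T\in\cap_n\Dom{\delta_1}^{n}$ is consumed, the boundedness of the higher $\delta_1^{k}(T)$ controlling the successive remainders; applying the same expansion to $\delta'\big(\delta_1^{j}(T)\big)$ then propagates the conclusion to every power of $\delta'$, completing (ii).
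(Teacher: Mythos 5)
Part (i) of your proposal is correct and complete: approximating $\psi\in\Dom|P|$ by a sequence from the core $\H_P^\infty$, passing to the limit in $|P|T\psi_n=\delta'(T)\psi_n+T|P|\psi_n$ and invoking the closedness of $|P|$ is precisely the standard argument, and your induction to higher powers is fine. (Note that the paper gives no proof of this lemma at all; it defers to Connes' Lemmas 13.1 and 13.2, so correctness of your own argument is the only issue.)

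Part (ii), however, has a genuine gap at its central analytic step. Your commutation identity is correct, and the remainder it produces,
\begin{align*}
\mu^2\,(P^2+\mu^2)^{-1}\,\delta_1^{2}(T)\,(1+P^2)^{1/2}(P^2+\mu^2)^{-1}(1+P^2)^{1/2}(P^2+\mu^2)^{-1},
\end{align*}
is indeed $O(\mu^{-2})$ in norm, hence absolutely convergent at infinity. But the principal term left behind, $\mu^{2}\,\delta_1(T)\,(P^2+\mu^2)^{-1}(1+P^2)^{1/2}(P^2+\mu^2)^{-1}$, still has operator norm of order $\mu^{-1}$ (the supremum over the spectral variable $u$ of $\mu^{2}\sqrt{1+u}\,(u+\mu^2)^{-2}$ is of order $\mu^{-1}$), and no further commutation can touch it: $\delta_1(T)$ already stands outside the resolvent sandwich, and everything to its right is a function of $P^{2}$ alone. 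So ``iterating until the tail is absolutely convergent'' cannot succeed; the norm integral of this one term diverges logarithmically after any number of steps. What actually closes the argument, and what is missing from your sketch, is the spectral theorem: once $\delta_1(T)$ is decoupled, one integrates the scalar function first, and $\int_1^\infty \mu^{2}\sqrt{1+u}\,(u+\mu^{2})^{-2}\,d\mu$ is bounded uniformly in $u\geq 0$, so that $\int_1^\infty \mu^{2}(P^2+\mu^2)^{-1}(1+P^2)^{1/2}(P^2+\mu^2)^{-1}\,d\mu$ is a bounded operator even though the integral of its integrand's norm diverges. Two further points need attention. First, near $\mu=0$ your two pieces separately blow up like $\mu^{-2}$ when $P$ is not bounded below: integrating the principal term over all of $(0,\infty)$ yields $\tfrac{1}{2}\,\delta_1(T)\,(1+P^2)^{1/2}|P|^{-1}$, which is unbounded when $0$ lies in the spectrum of $|P|$. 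This is cured either by estimating $\int_0^1$ directly via $\|[P^2(P^2+\mu^2)^{-1},T]\|\leq 2\|\ol T\|$, or by replacing $|P|$ throughout by $(1+P^2)^{1/2}$, whose difference from $|P|$ is a bounded function of $P$ preserving $\H_P^\infty$. Second, the propagation to all powers of $\delta'$ that you invoke uses the commutation $\delta'\delta_1=\delta_1\delta'$ on these domains (a consequence of $[|P|,P^2]=0$), which should be made explicit.
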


\begin{definition}
Given a Hilbert space $\H$ and a selfadjoint (possibly unbounded) operator $P$, we call Sobolev 
scale on $\N$, a family $(H^k)_{k\in \N}$ of Hilbert spaces such that 

\quad - $H^0=\H$,

\quad- $H^{k+1}$ is continuously included in $H^{k}$ for any $k\in \N$,

\quad - for any $k\in \N$, $\Dom P^k$ is a closed subset of $H^k$.
\end{definition}
    
By closed graph theorem, the last point implies that $P^k$ is continuous from $\Dom P^k$ endowed with 
the $H^k$-topology into $\H$.

Similar abstract Sobolev scales, defined as domains of the powers of an abstract differential operator, have been 
considered in \cite{Higson2}. A corresponding criterion for regularity has been obtained in \cite[4.26 Theorem]
{Higson2}. The scale we consider here will correspond in section \ref{SpectralTriple} to the Sobolev spaces (on the 
manifold with boundary) and not to the domains of the powers of the realization of a first order pseudodifferential 
operator. In the case without boundary, these scales coincide.   

When $T$ is a $\H_P^\infty$-bounded operator, we shall denote $T^{(k)}:=[P^2,\cdot]^k(T)$ for any 
$k\in \N$.
\begin{lemma}
\label{RegularityLem2} 
Let $P$ be a selfadjoint operator and $T$ be a $\H_P^\infty$-bounded operator.
Suppose that there is a Sobolev scale $(H^k)_{k\in \N}$ such that for any $k\in \N$, 
$T^{(k)}$ is continuous from $\H^\infty_P$ with the $H^k$-topology into $\H^\infty_P$ with the 
$\H$-topology. Then $\ol T\in \cap_{n\in\N} \Dom \delta^n$.
\end{lemma}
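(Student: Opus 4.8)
The plan is to reduce everything to Lemma~\ref{RegularityLemma}(ii): since $T$ is $\H^\infty_P$-bounded, it suffices to prove $T\in\cap_{n\in\N}\Dom\delta_1^n$, for then the bounded closure $\ol T$ automatically lies in $\cap_{n\in\N}\Dom\delta^n$. So the whole task is to show that each iterate $\delta_1^n(T)$ is $\H^\infty_P$-bounded.

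First I would compute the iterates explicitly. Since $(1+P^2)^{-k/2}$ commutes with $P^2$ for every $k$, one has $[P^2,\,T^{(n)}(1+P^2)^{-n/2}]=T^{(n+1)}(1+P^2)^{-n/2}$, and a straightforward induction on $\H^\infty_P$ then gives
\begin{align*}
\delta_1^n(T)=T^{(n)}\,(1+P^2)^{-n/2},\qquad n\in\N.
\end{align*}
Each factor preserves $\H^\infty_P$: the remark preceding the lemma shows $(1+P^2)^{-n/2}$ maps $\H^\infty_P$ bijectively onto itself (as $-n\in\Z$), while the hypothesis guarantees $T^{(n)}$ sends $\H^\infty_P$ into $\H^\infty_P$. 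Hence $\delta_1^n(T)$ is a well-defined endomorphism of $\H^\infty_P$, and the only thing left is its $\H$-continuity.

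The key step is to factor $\delta_1^n(T)$ through the Sobolev scale. By hypothesis $T^{(n)}$ is bounded from $(\H^\infty_P,\|\cdot\|_{H^n})$ into $(\H^\infty_P,\|\cdot\|_{\H})$, so it is enough to bound $\|(1+P^2)^{-n/2}\psi\|_{H^n}$ by $\|\psi\|_{\H}$ for $\psi\in\H^\infty_P$; equivalently, to show that $(1+P^2)^{-n/2}\colon\H\to H^n$ is bounded. Here I would invoke the closed graph theorem. This map is everywhere defined, since $(1+P^2)^{-n/2}$ sends $\H$ into $\Dom P^n\subseteq H^n$; and its graph is closed, because convergence in $H^n$ implies convergence in $\H$ (the inclusion $H^n\hookrightarrow\H$ being continuous) while $(1+P^2)^{-n/2}$ is already bounded on $\H$ with norm at most $1$, so the two candidate limits must coincide in $\H$, hence in $H^n$. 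As $\H$ and $H^n$ are Banach spaces, the closed graph theorem gives a constant $C_n'$ with $\|(1+P^2)^{-n/2}\psi\|_{H^n}\le C_n'\|\psi\|_{\H}$. Composing with the hypothesis on $T^{(n)}$,
\begin{align*}
\|\delta_1^n(T)\psi\|_{\H}=\|T^{(n)}(1+P^2)^{-n/2}\psi\|_{\H}\le C_n\,\|(1+P^2)^{-n/2}\psi\|_{H^n}\le C_nC_n'\,\|\psi\|_{\H}.
\end{align*}

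This shows each $\delta_1^n(T)$ is $\H^\infty_P$-bounded, so $T\in\cap_{n\in\N}\Dom\delta_1^n$ and Lemma~\ref{RegularityLemma}(ii) finishes the proof. I expect the closed graph argument bounding $(1+P^2)^{-n/2}\colon\H\to H^n$ to be the only non-formal point: it is precisely where the abstract scale $(H^k)_k$ --- which is not assumed to be the graph-norm domain of $P^k$ --- gets linked to the negative powers of $1+P^2$, and it is what makes it possible to avoid Sobolev spaces of negative order altogether.
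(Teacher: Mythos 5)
Your proof is correct, and it shares the paper's skeleton: both arguments reduce to Lemma \ref{RegularityLemma} (ii) via the identity $\delta_1^n(T)=T^{(n)}(1+P^2)^{-n/2}$ on $\H_P^\infty$, so that everything rests on showing $(1+P^2)^{-n/2}$ is continuous from the $\H$-topology into the $H^n$-topology. Where you differ is in how that continuity is established, and your route is genuinely more direct. The paper observes that $(P-i)^k=\sum_{j=0}^k\tbinom{k}{j}(-i)^{k-j}P^j$ is continuous from $\Dom P^k$ (with the $H^k$-topology) into $\H$ and bijective onto $\H$, applies the inverse mapping theorem to conclude that $(P-i)^{-k}$ is continuous from $\H$ onto $\Dom P^k$ with the $H^k$-topology, and then factors $(1+P^2)^{-k/2}=(P-i)^{-k}B$, where $B:=(P-i)^k(1+P^2)^{-k/2}$ is bounded by spectral theory. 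You instead apply the closed graph theorem once, directly to $(1+P^2)^{-n/2}\colon\H\to H^n$, the closedness of the graph following from the continuous inclusion $H^n\hookrightarrow\H$ and the boundedness of $(1+P^2)^{-n/2}$ on $\H$; this is a valid argument. Besides being shorter, your variant uses strictly less of the Sobolev-scale structure: only the set inclusion $\Dom P^n\subseteq H^n$ and the continuity of the inclusions $H^{k+1}\hookrightarrow H^k$ enter, whereas the paper's inverse-mapping step needs $\Dom P^k$ to be closed in $H^k$ (so that it is a Banach space) as well as the continuity of the operators $P^j$ on the scale. Since the closed graph and inverse mapping theorems are both forms of the open mapping theorem, the two proofs are close cousins; what your version buys is economy and weaker structural hypotheses, while the paper's route yields the stronger conclusion that $(1+P^2)^{-k/2}$ is a homeomorphism from $\H$ \emph{onto} $\Dom P^k$ with the $H^k$-topology --- an argument pattern the paper reuses later, in the proof of Theorem \ref{thmboundary} (ii), to get compactness of the resolvent.
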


\begin{proof} 
The operator $(P-i)^k = \sum_{j=0}^k \tbinom{k}{j}(-i)^{k-j}P^j$ is
continuous from $\Dom P^k$ with the $H^k$-topology into $\H$. Since $P$ is
selfadjoint, $P-i$ is a bijection from $\Dom P$ onto $\H$, and by composition, $(P-i)^{k}$ is a
bijective map from $\Dom P^k$ onto $\H$. The inverse mapping theorem now
implies that $(P-i)^{-k}$ is continuous from $\H$ onto $\Dom P^k$ with the
$H^k$-topology. Moreover, $(1+P^2)^{-k/2} = (P-i)^{-k} B$
where $B:= (P-i)^{k}(1+P^2)^{-k/2}$ is, by spectral theory, a bijective operator in $\B(\H)$. As a 
consequence, $(1+P^2)^{-k/2}$ is continuous from $\H$ onto $\Dom P^k$ with the $H^k$-topology. 
In particular,$(1+P^2)^{-k/2}$ is continuous from $\H^\infty_P$ with the $\H$-topology, into 
$\H^\infty_P$ with the $H^k$-topology. 

So the hypothesis gives that 
$T^{(k)}(1+P^2)^{-k/2}=\big([P^2,\cdot] \, (1+P^2)^{-1/2}\big)^k(T) =\delta_1^k(T)$ 
is a $\H_P^\infty$-bounded operator. The result follows from Lemma \ref{RegularityLemma}.
\end{proof}

The previous lemma essentially implies that, in order to prove the regularity of a spectral triple 
$(\A,\H,\DD)$, it is sufficient to construct a Sobolev scale $(H^k)_{k\in \N}$ adapted to $\H$ and $\DD$, 
such that the operators $\DD^k$ and $T^{(k)}$ behave respectively as operators of ``order'' $k$  with
respect to the Sobolev scale, when $T$ is any element of $\A \cup d\A$.

\begin{remark}
By Lemma \ref{RegularityLemma}, it is possible to obtain regularity without using Sobolev spaces of 
negative order, implicitly used for instance in \cite[Theorem 11.1]{Polaris} which follows the original 
argument of \cite{Cgeom}. This shall considerably simplify the proof of the regularity of the spectral triple 
on manifolds with boundary, since the continuity of realizations of elliptic boundary differential operators is 
usually established on Sobolev spaces of positive order \cite{Grubb, BL,BLZ,BW}. Note however that it 
may be possible to deal with negative orders by using the technique of transposition described in 
\cite{LM}.

Another approach of regularity can be found in \cite{Higson1,Higson2,Otogo}.
\end{remark}

\section{Background on elliptic systems on manifolds with boundary}

We review in this section a few definitions and basic properties about Sobolev spaces in manifolds with 
and without boundary and boundary pseudodifferential operators choosing Boutet de Monvel's calculus. 
More details and proofs can be found in classical references like \cite{Hormander,Grubb}.

Let $\wt M$ be a smooth compact manifold without boundary of dimension $d$ and $\wt E$ be a smooth 
hermitian vector bundle on $\wt M$. Let $M$ be an open submanifold of $\wt M$ of dimension $d$ such that 
$\ol M$ (topological closure) is a compact manifold with nonempty boundary $N:=\partial \ol M=\ol M \backslash M$. As 
a consequence, $N$ is a smooth compact submanifold of $\wt M$ without boundary of dimension $d-1$.

The sub-bundle of $\wt E$ on $\ol M$ (resp. $N$) is denoted $E$ (resp. $E_N$). 
We denote $H^s(\wt E)$, $H^s(E)$ the Sobolev spaces of order $s\in \R$ respectively on 
$\wt M$ with bundle $\wt E$ and $\ol M$ with bundle $E$. Recall that by definition 
$$
H^s(E) := H^s(\ol M,E):= r^+ \big(H^s(\wt E)\big)\, 
$$
where $r^+$ is the restriction to $M$. We refer to \cite[p. 496]{Grubb} for the definition of the topology of $H^s(E)$.

Remark that, for a given manifold $\ol M$ with boundary $\partial \ol M$, it is always possible to construct 
$\wt M$ with previous properties. Moreover, there exist constructions of invertible double for Dirac 
operators and more general first order elliptic operators on closed double of $\ol M$\cite{BW,BBL,BLZ}.

We denote $\Psi^k(\wt E)$ (resp. $\Diff^k(\wt E)$) the space of pseudodifferential (resp. differential) 
operators of order $k$ on $(\wt M,\wt E)$. Any element of $\Psi^k(\wt E)$ is a linear continuous
operator from $H^s(\wt E)$ into $H^{s-k}(\wt E)$, for any $s\in \R$.
  
We set
$$
C^\infty(\ol M, E):= r^+\,\big( C^\infty(\wt M,\wt E) \big) \, , \qquad
C^\infty(\ol M):=r^+ \big(C^\infty(\wt M)\big) \, .
$$
Despite notations, note that the spaces $H^s(E)$, $C^\infty(\ol M,E)$ and $C^\infty(\ol M)$ are spaces of 
functions defined on $M$, and not on $\ol M$, as $r^+$ is the restriction on $M$. Remark that $C^\infty(\ol M)$ 
is identified to $C^\infty(\ol M) \Id_{E_{|M}}$, so that $C^\infty(\ol M)$ can be seen as an algebra of bounded 
operators on $H^s(E)$, and in particular on $H^0(E)=L^2(\ol M, E)= r^+ L^2(\wt M,\wt E)$.

A differential operator $P$ on $\ol M$ is by definition a differential operator on $M$ with coefficients in 
$C^\infty(\ol M, E)$. We denote $\Diff^k(\ol M,E)$ the space of differential operators of order $k\in \N$ 
on $\ol M$.

Any element of $\Diff^k(\ol M, E)$ can be extended uniquely as a linear continuous operator from 
$H^s(E)$ into $H^{s-k}( E)$, for any $s\in \R$.
 
Finally, note that $C^\infty(\wt M,\wt E)$ is dense in any $H^p(\wt E)$ and $C^\infty(\ol M, E)$) is also dense 
in any $H^p(E):=r^+H^p(\wt E)$, and moreover,
$$
\bigcap_{p\in \N} H^p(\wt E)=C^\infty(\wt M,\wt E), \qquad \bigcap_{p\in \N} H^p(E)=C^\infty(\ol M,E)\,.
$$

The extension by zero operator $e^+$ is a linear continuous operator from $H^s(E)$ into $H^s(\wt E)$ for 
any $s\in ]-\half,\half[$ such that $e^+(u)=u$ on $M$ and $e^+(u)(x)=0$ for any 
$u \in C^\infty(\ol M,  E)$ and $x\in \wt M\backslash M$. 

For any $P\in \Psi^k(\wt E)$, we define its truncation to $M$ by
$$ 
P_+ := r^+ \, P \, e^+\, .
$$

Recall that $P\in \Psi^k(\wt E)$, $k\in \Z$, is said to satisfy the transmission condition if $P_+$ maps 
$C^\infty(\ol M,E)$ into itself which means that "$P_+$ preserves $C^\infty$ up to the boundary". In 
particular, any differential operator satisfies this condition. 

It turns out that if $P$ satisfies the transmission condition, $P_+$ can be seen as linear continuous 
operator from $H^s(E)$ into $H^{s-k}(E)$ for any $s>-\half$ 
(\cite[2.5.8 Theorem, 2.5.12 Corollary]{Grubb}).

We refer to \cite{Grubb, Grubb1} for all definitions of elliptic boundary system, normal trace operator and 
singular Green operator but to fix notations, we recall the Green formula 
\cite[1.3.2 Proposition]{Grubb}: for $P\in \Psi^k(\wt E)$, $k\in \N$ and for any $u,v\in C^\infty(\ol M,E)$,  
\begin{align}
\label{green}
(P_+ u ,v )_M - (u,(P^*)_+ v)_M = (\Afr_P\, \rho u , \rho v)_N
\end{align}
where $(u,v)_X:=\int_X u(x) \overline{v(x)}\,dx$ (if defined) and $\rho=\set{\ga_0,\cdots,\ga_{k-1}}$ is the 
Cauchy boundary operator given by $\ga_j u=(-i\partial_d)^j\, u_{\vert N}$ ($\partial_d$ being the interior 
normal derivative) and $\Afr_P$ is the Green matrix associated to $P$.

Here, $x=(x',x_d)$ is an element of $\ol M=N\sqcup M$ with $(x',0)\in N$ and $x_d$ denotes a normal 
coordinate. By \cite[Lemma 1.3.1]{Grubb}, $P=A+P'$ with $A=\sum_{l=0}^k S_l\,(-i\partial_d)^l$, where 
$S_l$ is a tangential differential operator of order $k-l$ supported near $N=\partial M$, and $P'$ is a 
pseudodifferential operator of order $k$ satisfying $(P'_+ u ,v )_M - (u,(P'^*)_+ v)_M = 0$. 
The Green matrix satisfies
$$ 
\Afr_P =(\Afr_{j,l})_{j,l=0,\cdots,k-1} \text{ with }
\Afr_{j,l}(x',D'):=iS_{j+l+1}(x',0,D') + \text{ lower-order terms} \, .
$$
and $\Afr_{jl}$ is zero if $j+l+1>k$. 

\begin{remark}
\label{casDirac}
When $P$ is a pseudodifferential operator of order 1, $\Afr_P$ is an endomorphism on the boundary 
$N$. For instance, if $P$ is a classical Dirac operator acting on a Dirac bundle, then $\Afr_P=-i \ga_d$ 
where $\ga$ is the (selfadjoint) Clifford action and \{$e_i$\}, $i=1,\ldots,d$ is a (local) orthonormal frame of $TM$ 
such that $e_d$ is the inward pointing unit normal, and $\ga_i:=\ga(e_i)$. This $\Afr_P$ corresponds to $-J_0$ in 
\cite{BBL,BLZ}.
\end{remark}

When $P\in \Psi^n(\wt E)$, $n\in \N$,  satisfies the transmission condition, $G$ is a singular Green 
operator of order $n$ and class $\leq n$ and $T:=\set{T^0,\cdots,T^{n-1}}$ is a system of normal trace 
operators associated with the order $n$, each $T^i$ going from $E$ to $E_N$, then the \textit{($H^n$-)
realization of the system $\set{P_+ +G, \,T}$} is the operator $(P+G)_T$ defined as the operator acting like 
$P_+ +G$ with domain 
$$
\Dom \,(P+ G)_T := \set{\psi \in H^n(\ol M,E) \ : \ T \psi = 0}.
$$
These realizations are always densely defined, and since $T$ is continuous from $H^n(E)$ into 
$\prod_{j=0}^{n-1} H^{n-j-\half}(E_N)$, $\Dom\, (P+ G)_T$ is a closed subset of $H^n(E)$. Recall that $P_+ + G$ is 
continuous from $H^{s}(E)$ into $H^{s-n}(E)$ for any $s>n-\half$.

We shall assume that all pseudodifferential operators satisfy the transmission property at the boundary.

We record here the following proposition, which is a direct application of
\cite[1.4.6 Theorem, Corollary 2.5.12, 2.7.8 Corollary]{Grubb}: 

\begin{prop}
\label{propcomposition}
Let $\set{P_+ +G, \,T}$ be an elliptic system of order $n$ ($G$ being with class
$\leq n$) with $T$ a system of normal trace operators associated with the order $n$.
 
Then for any $k\in \N$, there exist a singular Green operators $G_k$ of class $\leq nk$, and
a system of normal trace operators $T_k$ associated with the order $nk$ such that $\big((P+G)_T\big)^k$ 
is the realization of the elliptic system $\set{(P^k)_+ + G_k, \, T_k}$ of order $nk$. 

Moreover, $\Dom {((P+G)_T)^k}$ is a closed subset of $H^{nk}(E)$, and $(P^k)_+ + G_k$ is continuous from 
$H^s(E)$ into $H^{s-nk}(E)$ for $s>nk-\half$.
\end{prop}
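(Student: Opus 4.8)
The plan is to argue by induction on $k$, the case $k=1$ being the hypothesis itself with $G_1:=G$ and $T_1:=T$. For the inductive step I would write
$$
\big((P+G)_T\big)^{k+1} = (P+G)_T \circ \big((P+G)_T\big)^k
$$
and insert the induction hypothesis: $\big((P+G)_T\big)^k$ acts like $(P^k)_+ + G_k$, with $G_k$ a singular Green operator of class $\leq nk$, with $T_k$ a normal trace system associated with the order $nk$, and with domain the kernel of $T_k$ in $H^{nk}(E)$.

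The analytic content is the composition at the level of operators. Applying the Boutet de Monvel composition rules of Grubb \cite[1.4.6 Theorem]{Grubb}, one expands
$$
(P_+ + G)\big((P^k)_+ + G_k\big) = P_+(P^k)_+ + P_+ G_k + G\,(P^k)_+ + G\,G_k.
$$
By the transmission property, $P_+(P^k)_+ = (P^{k+1})_+ + L$, where the leftover $L$ is a singular Green operator; and each of the three remaining terms is again a singular Green operator, since a truncated pseudodifferential operator composed with, or composing, a singular Green operator stays in that class (\cite[Corollary 2.5.12, 2.7.8 Corollary]{Grubb}). I would then set $G_{k+1}:=L + P_+ G_k + G(P^k)_+ + G\,G_k$ and track the class through the composition estimates of those corollaries so as to reach class $\leq n(k+1)$; this bookkeeping of the class is the one point that needs care.

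For the domain I would use
$$
\Dom \big((P+G)_T\big)^{k+1} = \set{\psi \, : \, T_k\psi = 0 \text{ and } \big((P^k)_+ + G_k\big)\psi \in \Dom (P+G)_T},
$$
so that the boundary conditions are $T_k\psi=0$ together with $T\big((P^k)_+ + G_k\big)\psi=0$. Setting $T_{k+1}:=\set{T_k,\ T\circ((P^k)_+ + G_k)}$ yields a normal trace system associated with the order $n(k+1)$, again because trace operators compose with truncated operators and singular Green operators to give trace operators; the $n$ new conditions $T\circ((P^k)_+ + G_k)$ have orders up to $n(k+1)-1$, matching the $nk$ previous ones. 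Ellipticity of $\set{(P^{k+1})_+ + G_{k+1},\, T_{k+1}}$ follows because its principal boundary symbol is the $(k+1)$-st power of the invertible boundary symbol of the original elliptic system. Finally, elliptic regularity up to the boundary forces every $\psi$ in this iterated domain to lie in $H^{n(k+1)}(E)$, so the realization of $\set{(P^{k+1})_+ + G_{k+1},\, T_{k+1}}$ coincides with $\big((P+G)_T\big)^{k+1}$.

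The two remaining clauses are then immediate. Since $T_k$ is continuous from $H^{nk}(E)$ into $\prod_j H^{nk-j-\half}(E_N)$, its kernel $\Dom\big((P+G)_T\big)^k$ is closed in $H^{nk}(E)$; and $(P^k)_+$ is continuous from $H^s(E)$ into $H^{s-nk}(E)$ for $s>nk-\half$ by the transmission property recalled above, while a singular Green operator of order $nk$ and class $\leq nk$ is continuous in the same range, giving the stated mapping property of $(P^k)_+ + G_k$. The main obstacle throughout is not the algebra of the composition, which is exactly what Grubb's cited theorems deliver, but the careful accounting of the class of $G_{k+1}$ and the verification that the accumulated boundary conditions assemble into a legitimate normal trace system associated with the order $n(k+1)$.
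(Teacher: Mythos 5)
Your proposal is correct and takes essentially the same route as the paper: the paper offers no independent proof, stating the proposition as ``a direct application'' of Grubb's composition results (1.4.6 Theorem, Corollary 2.5.12, 2.7.8 Corollary), and your induction simply unfolds what those theorems deliver. The points you flag as delicate --- the class bookkeeping for $G_{k+1}$ and the normality and ellipticity of the composed boundary system $\set{(P^{k+1})_+ + G_{k+1},\, T_{k+1}}$ --- are exactly the content of the cited theorems, so your argument and the paper's citation rest on the same machinery.
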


When $P,P'$ are in $\Psi^\infty(\wt E)$, the leftover of $P$ and $P'$
$$
L(P,P'):= (P P')_+ -P_+ P'_+
$$
is a singular Green operator of order $k+k'$ and class $\leq k+k$' when the order of $P$ and $P'$ are $k$ 
and $k'$.

The following result is a consequence of \cite[(2.6.27)]{Grubb}, but we give a short proof.

\begin{lemma} 
\label{leftoverlem}
Let $P,P'\in \Psi^\infty(\wt E)$.

(i) If $P$ is differential, then $L(P,P')=0$.

(ii) If $P'$ is an endomorphism on $(\wt M, \wt E)$ (differential operator of order 0), then $L(P,P')=0$.
\end{lemma}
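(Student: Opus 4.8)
The plan is to reduce both statements to one algebraic identity for the leftover and then to exploit the fact that a differential operator is \emph{local}. Working on $u\in C^\infty(\ol M,E)$, where $P_+=r^+Pe^+$, $P'_+=r^+P'e^+$ and all compositions make sense by the transmission property, I would insert $\Id=(\Id-e^+r^+)+e^+r^+$ between $P$ and $P'$ inside $(PP')_+=r^+PP'e^+$ to obtain
\[
L(P,P')=r^+PP'e^+-r^+Pe^+r^+P'e^+=r^+\,P\,(\Id-e^+r^+)\,P'\,e^+ .
\]
The key object is $\Id-e^+r^+$: applied to any distribution $v$ on $\wt M$, it vanishes on $M$ (there $e^+r^+v$ restores $v$), so $(\Id-e^+r^+)v$ is supported in the closed exterior $\wt M\setminus M$. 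One then concludes on all relevant Sobolev spaces by density and continuity.

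For (i), I would set $w:=(\Id-e^+r^+)P'e^+u$, which by the remark above has $\supp(w)\subseteq\wt M\setminus M$. Since $P$ is differential, hence local, $\supp(Pw)\subseteq\supp(w)\subseteq\wt M\setminus M$; as $M$ is open and disjoint from this set, $r^+(Pw)=0$, so $L(P,P')=0$. This is exactly the step where the hypothesis that $P$ is differential, rather than merely pseudolocal, is needed, and the support argument makes any regularity of $w$ across $N$ irrelevant.

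For (ii), $P'$ is multiplication by a smooth section $a$ of $\End\wt E$, so $P'e^+u=a\,e^+u$. Because $e^+u$ is supported in $\ol M$ and $a$ is smooth, $a\,e^+u$ is a genuine $L^2$ (indeed $H^s$, $s<\half$) section supported in $\ol M$ with no distributional mass concentrated on $N$. Hence $(\Id-e^+r^+)(a\,e^+u)$ vanishes on $M$, vanishes on $\wt M\setminus\ol M$ where $a\,e^+u=0$, and carries nothing on the measure-zero set $N$; it is therefore identically zero, and $L(P,P')=r^+P\cdot 0=0$.

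The one delicate point, which is why the two cases are argued differently, is the treatment of the boundary $N$: for a general $P'$ in part (i) the distribution $P'e^+u$ may well have contributions concentrated on $N$ (derivatives of the jump of $e^+u$), but the support-plus-locality argument sidesteps this entirely, whereas in part (ii) the order-zero (multiplication) nature of $P'$ is precisely what guarantees that $a\,e^+u$ has no such concentration, so that $\Id-e^+r^+$ annihilates it outright.
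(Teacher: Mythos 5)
Your proof is correct and follows essentially the same route as the paper's: part (i) is exactly the paper's identity $r^+Pe^+r^+=r^+P$ (which you justify by the support-plus-locality argument), and part (ii) is exactly the paper's identity $e^+r^+P'e^+=P'e^+$ (which you justify by noting that $P'e^+u$ has no distributional mass on $N$). The only difference is that you spell out the support arguments the paper leaves implicit.
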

\begin{proof}
$(i)$ From the locality of differential operators, we see that $r^+ P e^+ r^+ = r^+ P$. It follows that 
$L(P,P')=0$.

$(ii)$ Since $e^+ r^+P'e^+ =P' e^+$, the result follows. 
\end{proof}

\section{Spectral triples on manifolds with boundary}
\label{SpectralTriple}

\subsection{Spectral triples on $\wt M$ and $\ol M$}

Since it is a first step to the main theorem of this section, we record the following known fact in 
noncommutative geometry \cite{CM}: any elliptic pseudodifferential operator of first order on compact 
manifolds, whose square has a scalar principal symbol, yields a regular spectral triple with the algebra 
of smooth functions. 

Recall that the principal symbol $\sigma_d(P)$ of $P\in \Psi^d(\wt E)$ is said to be scalar when it is of 
the form $\sg \Id_E$ with $\sg\in C^\infty(T^*M,\C)$.

\begin{prop} 
\label{propmtilde}
Let $P\in \Psi^1(\wt E)$ be an elliptic symmetric pseudodifferential operator of order one on $\wt M$ such 
that the principal symbol of $P^2$ is scalar. Then $\big(C^\infty(\wt M), L^2(\wt E), P\big)$ is a regular  
spectral triple of dimension $d$.
\end{prop}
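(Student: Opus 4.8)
The plan is to check the three axioms of a spectral triple and then deduce regularity from the Sobolev-scale criterion of Lemma~\ref{RegularityLem2}, with $\DD=P$, $\H=L^2(\wt E)$ and $\A=C^\infty(\wt M)$ acting by multiplication $f\mapsto f\,\Id_{\wt E}$.

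First the algebraic data. $C^\infty(\wt M)$ is a commutative unital involutive algebra (involution given by complex conjugation, unit the constant function $1$), and its representation by multiplication operators on $L^2(\wt E)$ is faithful, since a smooth $f$ acting as $0$ vanishes almost everywhere hence identically; each multiplication operator is bounded with norm $\norminf{f}$. For the operator, since $\wt M$ is compact without boundary and $P\in\Psi^1(\wt E)$ is elliptic and symmetric, $P$ is essentially selfadjoint on $C^\infty(\wt M,\wt E)$ and its closure is selfadjoint with $\Dom P=H^1(\wt E)$; elliptic regularity moreover gives $\Dom P^k=H^k(\wt E)$ for every $k$. Ellipticity makes $(P\pm i)^{-1}$ continuous from $L^2(\wt E)$ into $H^1(\wt E)$, so the Rellich compactness of $H^1(\wt E)\hookrightarrow L^2(\wt E)$ shows the resolvent is compact, while the Weyl law for a first-order elliptic selfadjoint pseudodifferential operator on a $d$-dimensional closed manifold gives $N(\lambda)=\OO(\lambda^d)$, that is $\mu_n(|P|)=\OO(n^{1/d})$. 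Finally, multiplication by $f$ preserves $H^1(\wt E)=\Dom P$, and $[P,f]$ has order $1+0-1=0$, hence extends to a bounded operator $df$; this is the last axiom, so $(C^\infty(\wt M),L^2(\wt E),P)$ is a spectral triple of dimension $d$.

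For regularity I would apply Lemma~\ref{RegularityLem2} with the tautological Sobolev scale $H^k:=H^k(\wt E)$. It is indeed a Sobolev scale for $P$: $H^0=L^2(\wt E)=\H$, the inclusions $H^{k+1}\subset H^k$ are continuous, and $\Dom P^k=H^k$ is trivially closed in $H^k$; moreover $\H_P^\infty=\cap_k\Dom P^k=\cap_p H^p(\wt E)=C^\infty(\wt M,\wt E)$. Now take $T\in\A\cup d\A$, so $T=f\,\Id$ or $T=df=[P,f]$; in either case $T$ is a pseudodifferential operator of order $0$. Since $P^2\in\Psi^2(\wt E)$, the map $[P^2,\cdot]$ raises the order by at most one, so $T^{(k)}=[P^2,\cdot]^k(T)$ is a pseudodifferential operator of order at most $k$; as such it preserves $C^\infty(\wt M,\wt E)=\H_P^\infty$ and is continuous from $H^k(\wt E)$ into $H^0(\wt E)=\H$. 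This is exactly the hypothesis of Lemma~\ref{RegularityLem2}, which yields $\ol T\in\cap_{n}\Dom\delta^n$ for each $T\in\A\cup d\A$, i.e. regularity.

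The heart of the argument, and the step to carry out with care, is the order bookkeeping for the iterated commutators: one must confirm that passing from $T$ to $[P^2,T]$ raises the pseudodifferential order by exactly one, so that $T^{(k)}$ has order $k$ and maps $H^k(\wt E)$ continuously into $L^2(\wt E)$; this rests on the symbol calculus on the closed manifold $\wt M$ together with the identification $\Dom P^k=H^k(\wt E)$ by elliptic regularity. Observe that the whole scheme is arranged so as never to handle $|P|$ directly -- Lemma~\ref{RegularityLem2} trades $\delta=[|P|,\cdot]$ for the operator $\delta_1$ built from $P^2$ -- which is exactly what allows the proof to use only Sobolev spaces of positive order. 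The hypothesis that the principal symbol of $P^2$ is scalar makes $P^2$ of Laplace type, so that $|P|$ is itself a classical pseudodifferential operator of order one; while the Sobolev-scale route does not strictly need this, it is the property required for the boundary analogue treated below and for the later dimension-spectrum computations.
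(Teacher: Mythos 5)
Your overall architecture coincides with the paper's: selfadjointness of $P$ with domain $H^1(\wt E)$, boundedness of $[P,f]$ because it has order $0$, the identification $\Dom P^k=H^k(\wt E)$ by ellipticity, compactness of the resolvent, and then regularity via Lemma~\ref{RegularityLem2} applied to the Sobolev scale $\big(H^k(\wt E)\big)_{k\in\N}$ with $\H_P^\infty=C^\infty(\wt M,\wt E)$. But the justification of the decisive step is wrong. You assert that ``since $P^2\in\Psi^2(\wt E)$, the map $[P^2,\cdot]$ raises the order by at most one.'' This is false for operators acting on a vector bundle: if $A\in\Psi^a$ and $B\in\Psi^b$, then $[A,B]$ has order $a+b$ in general, with principal symbol $[\sigma_a(A),\sigma_b(B)]$; the order drops to $a+b-1$ only when these principal symbols commute. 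This is exactly where the hypothesis ``the principal symbol of $P^2$ is scalar'' enters, and it is the content of the paper's corresponding sentence. Concretely, for $T=df=[P,f]$ the symbol $\sigma_0(T)$ is matrix-valued, so $[\sigma_2(P^2),\sigma_0(T)]$ need not vanish and $[P^2,T]$ can have order $2$; even for $T=f$ scalar, where the first commutator $[P^2,f]$ does have order $\le 1$, its principal symbol $-i\sum_j \partial_{\xi_j}\sigma_2(P^2)\,\partial_{x_j}f$ is matrix-valued, so the second iterate $[P^2,[P^2,f]]$ can have order $3$. Without the scalar hypothesis the iterates $T^{(k)}$ are only of order $\le 2k$, which destroys the continuity from $H^k(\wt E)$ into $L^2(\wt E)$ that Lemma~\ref{RegularityLem2} requires.

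Your closing remark compounds the error: you claim that ``the Sobolev-scale route does not strictly need'' the scalar-principal-symbol hypothesis, its role being instead to make $|P|$ a classical pseudodifferential operator of order one. The opposite is true: the whole point of Lemma~\ref{RegularityLem2} is never to touch $|P|$, and the scalar hypothesis is precisely what makes each commutator with $P^2$ gain only one order instead of two (a scalar symbol is central, so the top-order term of every iterated commutator cancels). Once you replace your ``order bookkeeping'' justification by this observation, the rest of your argument goes through and is essentially the paper's proof.
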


\begin{proof} 
Since $P\in \Psi^1(\wt M)$ is an elliptic symmetric operator on $L^2(\wt E)$, it is 
selfadjoint with domain $H^1(\wt E)$. Any $a\in C^\infty(\wt M)$ is represented by the left multiplication 
operator on $L^2(\wt E)$ which is bounded. Since $a$ is scalar, the commutator $[P,a]$ is a 
pseudodifferential operator of order 0 and thus can be extended as a bounded operator on $L^2(\wt E)$. 

Ellipticity implies that $\Dom P^k = H^k(\wt E)$ for any $k\in \N$, where here $P^k$ is the
composition of operators $P:H^1(\wt E)\to L^2(\wt E)$ as unbounded operators in $L^2(\wt E)$. The 
operator $(P-i)^d$ is thus continuous and bijective from $H^d(\wt E)$ onto $\H$, and by inverse mapping 
theorem $(P-i)^{-d}$ is continuous from $\H$ onto $H^d(\wt E)$. By a classical result (see for instance 
\cite[A.4 Lemma]{Grubb}), $(P-i)^{-d}$ is compact, and the dimension of the given triple is $d$.

It remains to check the regularity. By Sobolev lemma, $\H_P^\infty=C^\infty(\wt M,\wt E)$ so that, for any 
$a\in C^\infty(\wt M)$, $a_{|\H_P^\infty}$ and $da_{|\H_P^\infty}$,  are $\H_P^\infty$-bounded 
operators. Since the principal symbol of $P^2$ is scalar, $(a_{|\H_P^\infty})^{(k)}$ and 
$(da_{|\H_P^\infty})^{(k)}$ are pseudodifferential operators of order $k$ defined on $\H_P^\infty$. 
Applying now Lemma \ref{RegularityLem2} with the Sobolev scale $\big(H^k(\wt E)\big)_{k\in \N}$ 
yields the result.
\end{proof}

In the case of manifolds with boundary, the full algebra $C^\infty(\ol M)$ cannot yield, in general, regular 
spectral triples on $\ol M$, because there is a conflict between the necessity of selfadjointness for the 
realization $P_T$ which is implemented by a boundary condition given by a trace operator $T$, and the 
fact that the elements of the algebra must preserve all the domains $\Dom {P_T}^k$. Therefore, we have 
to consider a subalgebra of $C^\infty(\ol M)$ that will be adapted to a realization $P_T$:

\begin{definition}
Let $\{P_+,\,T\}$ be an elliptic pseudodifferential boundary system of order one, where 
$P\in \Psi^1(\wt E)$, $T=S\ga_0$ is a normal trace operator, with $\ga_0: u \mapsto u_{\vert N}$ and 
$S$ an endomorphism of $E_N$. Suppose moreover that $P_T$ is selfadjoint (to apply Definition 
\ref{Def}).

We define $\A_{P_T}$ as the $*$-algebra of smooth functions $a\in C^\infty(\ol M)$ such that 
\begin{align*}
&a \,\H^\infty_{P_T} \subseteq  \H^\infty_{P_T} , 
\qquad  a^* \, \H^\infty_{P_T} \subseteq  \H^\infty_{P_T}\, .
\end{align*}
\end{definition} 

\begin{remark}
Note that for any $a\in C^\infty(\ol M)$, we have $a\,\Dom P_T \subseteq \Dom P_T$. As a consequence, 
when $a\in \A_{P_T}$, $[P_T,a]$ is an operator with domain $\Dom P_T$, which sends $\H^\infty_{P_T}$ 
into itself.
\end{remark}

The following lemma provides some lower bounds to $\A_{P_T}$.

\begin{lemma}
\label{aptlem}

(i)  $\A_{P_T}$ contains the algebra 
$$
\B:=\set{a\in C^\infty(\ol M) \, :  \,T\, d^k a \, , \, T\, d^k a^* \in \Psi^\infty(E_N)\,T, \, \text{ for any } k \in \N}
$$
where $d^k :=  [P_+,\cdot \,]^k$.

(ii) If $P$ is a differential operator, $\A_{P_T}$ contains the smooth functions that are constant near the boundary.

\end{lemma}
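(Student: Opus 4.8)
The plan is to first pin down a concrete description of the smooth domain $\H^\infty_{P_T}$ and then to reduce both parts to a single commutator computation. By Proposition \ref{propcomposition} with $n=1$, each $\Dom P_T^k$ is a closed subspace of $H^k(E)$, so $\H^\infty_{P_T}=\cap_k\Dom P_T^k\subseteq\cap_k H^k(E)=C^\infty(\ol M,E)$. On smooth sections $P_T$ acts as $P_+$, and $P_+$ preserves $C^\infty(\ol M,E)$; unwinding the iterated domain condition $\psi\in\Dom P_T,\ P_T\psi\in\Dom P_T,\dots$ then shows that, for $\psi\in C^\infty(\ol M,E)$,
\[
\psi\in\H^\infty_{P_T}\iff T\,P_+^{\,j}\psi=0\ \text{ for all }j\in\N .
\]
This reduction is the only place the analytic input (Grubb's composition result) is used; everything afterwards is algebraic.

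For part (i) I would first note that $d=[P_+,\cdot\,]$ is a derivation, so $d^{\,k}(ab)=\sum_{i}\binom{k}{i}(d^{\,i}a)(d^{\,k-i}b)$; combined with the defining factorization $T\,d^{\,k}a\in\Psi^\infty(E_N)\,T$ this makes $\B$ a $*$-algebra. Then, fixing $a\in\B$ and $\psi\in\H^\infty_{P_T}$ (so that $a\psi$ is again smooth), I would apply the algebraic identity
\[
P_+^{\,j}\,a=\sum_{i=0}^{j}\binom{j}{i}\,(d^{\,i}a)\,P_+^{\,j-i}
\]
to $\psi$ and compose with $T$, getting $T\,P_+^{\,j}(a\psi)=\sum_{i=0}^{j}\binom{j}{i}\,T(d^{\,i}a)\,P_+^{\,j-i}\psi$. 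Writing $T\,d^{\,i}a=Q_i\,T$ with $Q_i\in\Psi^\infty(E_N)$, each summand equals $\binom{j}{i}\,Q_i\,\bigl(T\,P_+^{\,j-i}\psi\bigr)=0$ by the description above. Hence $T\,P_+^{\,j}(a\psi)=0$ for all $j$, i.e. $a\psi\in\H^\infty_{P_T}$; the condition on $a^*$ built into $a\in\B$ treats $a^*$ identically, so $a\in\A_{P_T}$.

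For part (ii) it suffices to show that a function $a$ constant near $N$ lies in $\B$, after which (i) applies. When $P$ is differential each $d^{\,i}a$ is again a differential operator, hence by locality is supported in $\supp(da)$ for every $i\geq 1$; since $a$ is constant near $N$, this support avoids a neighbourhood of the boundary, so $d^{\,i}a\,\psi$ vanishes near $N$ and therefore $T\,d^{\,i}a=S\ga_0\,d^{\,i}a=0\in\Psi^\infty(E_N)\,T$ for $i\geq1$. For $i=0$ one has $T\,a=a_{\vert N}\,T$, and multiplication by $a_{\vert N}\,\Id_{E_N}$ is a zeroth-order operator in $\Psi^\infty(E_N)$. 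The same computations hold verbatim for $a^*=\overline a$, so $a\in\B$.

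The step I expect to be the main obstacle is the description of $\H^\infty_{P_T}$: one has to know that vectors in $\cap_k\Dom P_T^k$ are automatically smooth up to the boundary, rather than living only in an abstract intersection of domains, and this is precisely what the iterated ellipticity of Proposition \ref{propcomposition} supplies. Once the boundary tower $\{T\,P_+^{\,j}\psi=0\}$ is available, parts (i) and (ii) become formal consequences of the commutator expansion and of the locality of differential operators, respectively.
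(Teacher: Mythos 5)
Your proposal is correct and, for part (i), follows essentially the paper's own argument: the same commutator expansion $P_+^{\,j}a=\sum_{i}\binom{j}{i}(d^{\,i}a)P_+^{\,j-i}$, the same factorization $T\,d^{\,i}a=Q_i\,T$ with $Q_i\in\Psi^\infty(E_N)$, and the same trace description of the iterated domains coming from Proposition \ref{propcomposition} (the paper phrases it as $a\,\Dom P_T^k\subseteq\Dom P_T^k$ for each $k$ rather than via the smooth domain $\H^\infty_{P_T}$, but this is only a repackaging). The sole difference is in (ii), where the paper argues directly by writing $a=\la 1_M+f$ with $f\in C_c^\infty(M)$ and using the inclusions $C^\infty_c(M)\,\Dom P_T^k\subseteq H_c^k(E)\subseteq\Dom P_T^k$, while you reduce to (i) by checking $a\in\B$; both arguments rest on the same key point, the locality of differential operators, so your route is a valid and equally economical variant.
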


\begin{proof}
$(i)$
Suppose that $a\in \B$. Since $Ta = a_{|N}\, T$, we directly check that 
$a \Dom P_T \subseteq \Dom P_T$. By induction, for any $j\in \N$,
${[P_+}^j,a]=\sum_{i=1}^j c_{ij} \,d^i(a) \,{P_+}^{j-i}$ where $c_{ij}$ are scalar coefficients. Choose 
$k\in \N$ and $\psi\in \Dom {P_T}^k$. Thus, $\psi \in H^k(E)$ and for any $0\leq j\leq k-1$, 
$T({P_+}^j \psi) =0$. So, there are $R_i \in \Psi^\infty(E_N)$ such that
$$
T ({P_+}^j a \psi) = \sum_{i=1}^j c_{ij} T\big(d^i(a) {P_+}^{j-i}\psi \big) =\sum_{i=1}^j
c_{ij} R_i \,T\,{P_+}^{j-i} \psi =  0
$$
which proves that $a\psi \in \Dom P_T^k$ since $a\psi \in H^k(E)$. The same can be obtained for $a^*$.

$(ii)$ If $a$ is a smooth function constant near the boundary, there is $\la\in \C$ and $f\in C_c^\infty(M)$ with 
compact support such that $a=\la 1_M+f$, where $1_M$ is the function equal to 1 on $M$. The result 
follows from inclusions $C^\infty_c(M) \Dom P_T^k \subseteq H_c^k(E) \subseteq \Dom P_T^k$.
\end{proof}

Here is the main result of this section:

\begin{theorem}
\label{thmboundary}
Let $P\in \Psi^1(\wt E)$ be a symmetric pseudodifferential operator of order one on $\wt M$ satisfying the 
transmission condition.

Let $S \in C^\infty\big(N,\End(E_N)\big)$ be an idempotent selfadjoint endomorphism on the boundary such 
that the system $\set{P_+,\,T:=S\gamma_0}$ is an elliptic pseudodifferential boundary operator.  Then

(i) $P_T$ is selfadjoint if and only if
\begin{align}
\label{Afr}
(1-S) \,\Afr_P\,(1-S)=0 \quad \text{and} \quad S\, \Afr_P^{-1}\, S =0 \, .
\end{align}

(ii) When $P_T$ is selfadjoint, $\big(C^\infty(\ol M), L^2(E), P_T\big)$ is a spectral triple of dimension $d$.

(iii) When $P$ is a differential operator such that $P^2$ has a scalar principal symbol and $P_T$ is 
selfadjoint, the spectral triple $\big(\A_{P_T}, L^2(E), P_T\big)$ is regular.

(iv) Under the hypothesis of $(iii)$, $\A_{P_T}$ is the largest algebra $\A$ in $C^\infty(\ol M)$ such that 
the triple $\big(\A, L^2(E), P_T\big)$ is regular.
\end{theorem}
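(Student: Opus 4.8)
The plan is to treat the four assertions in turn, the selfadjointness criterion $(i)$ being the analytic heart of the matter while the regularity statements $(iii)$--$(iv)$ then follow almost formally from the machinery of Section \ref{Regularity} and Grubb's calculus. For $(i)$, I would start from the Green formula \eqref{green}, which for the first order symmetric $P$ reads $(P_+u,v)_M-(u,P_+v)_M=(\Afr_P\,\ga_0 u,\ga_0 v)_N$ on $C^\infty(\ol M,E)$ and extends to $H^1(E)$. Interchanging $u$ and $v$ and conjugating shows that symmetry of $P$ forces $\Afr_P$ to be skew-adjoint, $\Afr_P^*=-\Afr_P$, while ellipticity of the system gives that $\Afr_P$ is invertible. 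The decisive step is to identify $\Dom P_T^*$: since $P$ is symmetric one has $(P^*)_+=P_+$, and by Grubb's theory the adjoint of the elliptic realization $P_T$ is again the realization of an elliptic system, hence has domain inside $H^1(E)$. Feeding this into the Green formula, together with $\Afr_P^*=-\Afr_P$ and the surjectivity of $\ga_0$ onto $H^{1/2}(E_N)$, gives $\Dom P_T^*=\set{v\in H^1(E):(1-S)\Afr_P\ga_0 v=0}$. Because $S$ is a selfadjoint idempotent, $\set{S\ga_0 v=0}$ amounts to $\ga_0 v\in\Ima(1-S)$, so selfadjointness is equivalent to the equality of the two boundary subspaces $\Ima(1-S)$ and $\set{\phi:(1-S)\Afr_P\phi=0}=\Afr_P^{-1}(\Ima S)$. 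The inclusion $\Ima(1-S)\subseteq\Afr_P^{-1}(\Ima S)$ is equivalent to $(1-S)\Afr_P(1-S)=0$, and the reverse inclusion to $S\Afr_P^{-1}S=0$; this is precisely \eqref{Afr}, and both directions of the equivalence come out of this short computation using $S^2=S=S^*$ and the invertibility of $\Afr_P$.

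For $(ii)$ I would check the axioms directly. The algebra $C^\infty(\ol M)$ acts by bounded multiplication operators on $L^2(E)$, faithfully and compatibly with the involution. The selfadjoint realization $P_T$ has $\Dom P_T\subseteq H^1(E)$, which embeds compactly into $L^2(E)$ by Rellich, so $P_T$ has compact resolvent; Weyl asymptotics for the first order elliptic selfadjoint boundary problem on the $d$-dimensional $\ol M$ give $\mu_n(|P_T|)=\OO(n^{1/d})$. Finally, for $a\in C^\infty(\ol M)$ one has $T(au)=a_{|N}\,Tu$, whence $a\,\Dom P_T\subseteq\Dom P_T$, and on this domain $[P_T,a]=[P,a]_+$; since $a$ is an endomorphism, Lemma \ref{leftoverlem} removes the leftover terms and $[P,a]_+$ is the truncation of the order zero operator $[P,a]$, hence bounded. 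This yields the spectral triple of dimension $d$.

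For $(iii)$ the plan is to apply Lemma \ref{RegularityLem2} with the Sobolev scale $H^k:=H^k(E)$. Proposition \ref{propcomposition} (with $n=1$ and no singular Green part, as $P$ is differential) shows exactly that $\Dom P_T^k$ is closed in $H^k(E)$, so $(H^k(E))_{k\in\N}$ is a Sobolev scale for $(L^2(E),P_T)$; moreover $\H^\infty_{P_T}=\bigcap_k\Dom P_T^k\subseteq C^\infty(\ol M,E)$. By definition of $\A_{P_T}$ and the remark following it, every $a\in\A_{P_T}$ and every $da=[P_T,a]$ is $\H^\infty_{P_T}$-bounded and maps $\H^\infty_{P_T}$ into itself, and $P^2$ likewise preserves $\H^\infty_{P_T}$. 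Since the principal symbol of $P^2$ is scalar it commutes with every symbol, so each bracket with $P^2$ raises the order by at most one; hence $a^{(k)}=[P^2,\cdot\,]^k(a)$ and $(da)^{(k)}$ are differential operators of order $\le k$, thus continuous from $\H^\infty_{P_T}$ with the $H^k$-topology into $\H^\infty_{P_T}$ with the $L^2$-topology. Lemma \ref{RegularityLem2} then gives $\ol a,\ \ol{da}\in\bigcap_n\Dom\delta^n$, which is the regularity of $(\A_{P_T},L^2(E),P_T)$.

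For $(iv)$, suppose $\A\subseteq C^\infty(\ol M)$ is such that $(\A,L^2(E),P_T)$ is regular, so $\A\cup d\A\subseteq\bigcap_n\Dom\delta^n$ with $\delta=\ol{[|P_T|,\cdot\,]}$. The key observation is the commutator expansion $|P_T|^k a=\sum_{j=0}^k\tbinom{k}{j}\delta^j(a)\,|P_T|^{k-j}$: when all $\delta^j(a)$ are bounded this shows $a\,\Dom|P_T|^k\subseteq\Dom|P_T|^k$ for every $k$, hence $a\,\H^\infty_{P_T}\subseteq\H^\infty_{P_T}$. As $\A$ is involutive the same holds for $a^*$, so $a\in\A_{P_T}$ and $\A\subseteq\A_{P_T}$; together with $(iii)$ this proves that $\A_{P_T}$ is the largest such algebra. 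The main obstacle throughout is $(i)$: pinning down $\Dom P_T^*$ rigorously, i.e.\ that the adjoint of the elliptic realization stays in $H^1(E)$ and is governed by the adjoint boundary condition, for which one must lean on Grubb's calculus; once this is in hand, the remaining parts are essentially bookkeeping with the results already established.
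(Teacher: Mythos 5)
Your treatment of (ii), (iii) and (iv) is essentially the paper's own proof: boundedness of the commutators via Lemma \ref{leftoverlem} (writing $[P_T,a]=[P,\wt a]_+$ on $\Dom P_T$), compact resolvent from $\Dom P_T\subseteq H^1(E)$, regularity from Lemma \ref{RegularityLem2} applied to the Sobolev scale $\big(H^k(E)\big)_k$ together with Proposition \ref{propcomposition} and the scalar principal symbol of $P^2$ (the leftover-free identity $[P_T^2,\cdot\,]^k(a)=\big([P^2,\cdot\,]^k(\wt a)\big)_+$ on $\H^\infty_{P_T}$ being exactly where the hypothesis ``$P$ differential'' enters), and maximality in (iv) from the expansion $|P_T|^k a=\sum_{j}\tbinom{k}{j}\delta^j(a)|P_T|^{k-j}$, which is precisely the content of the step of \cite[Lemma 2.1]{CReconstruction} that the paper cites. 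Your reduction of (i) to linear algebra on the boundary --- selfadjointness iff $\Ima(1-S)=\Afr_P^{-1}(\Ima S)$, with the two inclusions equivalent to $(1-S)\Afr_P(1-S)=0$ and $S\Afr_P^{-1}S=0$ respectively --- is correct, and is in substance equivalent to what the paper does by exhibiting an explicit homeomorphism $\Psi$ and its inverse so as to verify the criterion of \cite[1.6.11 Theorem]{Grubb}.

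The genuine gap is the sentence ``by Grubb's theory the adjoint of the elliptic realization $P_T$ is again the realization of an elliptic system, hence has domain inside $H^1(E)$.'' The inclusion $\Dom P_T^*\subseteq H^1(E)$ is the analytic heart of the ``if'' direction (for the ``only if'' direction you do not need it, since selfadjointness gives $\Dom P_T^*=\Dom P_T\subseteq H^1(E)$ for free), and it is not a general feature of elliptic realizations: a priori $\Dom P_T^*$ lives only in the maximal domain $\set{u\in L^2(E)\,:\,Pu\in L^2(E)}$, and whether it stays in $H^1(E)$ is exactly elliptic regularity for the adjoint boundary condition $\wh{(1-S)}\,\Afr_P\,\ga_0 v=0$, i.e.\ ellipticity of the adjoint system. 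That ellipticity of $\set{P_+,\wh S\ga_0}$ implies ellipticity of this adjoint system is true here, but only because the condition is normal and $P$ is symmetric: at the level of boundary symbols, symmetry forces $\Afr_P$ to carry the space $N_+(\xi')$ of boundary values of decaying solutions onto its orthogonal complement, so that $\wh S:N_+(\xi')\to\Ima S$ bijective is equivalent (by applying $\Afr_P$ and taking orthocomplements) to $\wh{(1-S)}\Afr_P:N_+(\xi')\to\Ima(1-S)$ bijective. So you must either supply this symbol argument or cite the precise statement in Grubb that packages it --- which is exactly the selfadjointness criterion \cite[1.6.11 Theorem]{Grubb} used by the paper, whose whole point is to dispense with a separate proof that the adjoint domain is regular. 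As written, the key analytic fact is assumed rather than proven or precisely referenced; with that reference (or the symbol computation) inserted, your proof of (i) is complete and arguably more transparent than the paper's.
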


\begin{proof}
$(i)$  Since $\set{P_+,T}$ is elliptic and $P^*=P$ (viewed as defined on $H^1(\ol M)$), we can apply 
\cite[1.6.11 Theorem]{Grubb} with the same notations, except that $S$ is here not surjective: it is an 
endomorphism only surjective on $E_N^+ := S(E_N)$ with kernel $E_N^-:= (1-S)(E_N)$, so $E_N$ is
the direct orthogonal sum of $E_N^+$ and $E_N^-$ and our $S$ is just replaced by the notation 
$\wh S$ where $\wh  R$ is the surjective morphism associated to the endomorphism $R$ from its domain 
to its range $R(E)$ to avoid confusion. In the notation of \cite{Grubb}, we take $B=P_T$, 
$0=G=K=G'=\wt G=T'$, $\rho=\ga_0$. Thus, $P_T$ is selfadjoint if there is a homeomorphism $\Psi$ from 
$H^s(E_N^+)$ onto $H^s(E_N^-)$, such that (since $P=P^*$ yields $\Afr_P^*=-\Afr_P$)
\begin{equation}
-C'^*\, \Afr_P \,\ga_0 = \Psi \, \wh  S \, \ga_0\, ,\label{eqadjoint}
\end{equation}
with $C'$ satisfying $\wh {(1-S)} C'=Id_{E_N^-}$ and $ C' \wh {(1-S)}=(1-S)$. In other words, $C'$ is the 
injection from $E_N^-$ into $E_N$ and $C'^*=\wh{(1-S)}$.

By \cite[(1.6.52)]{Grubb}, when this is the case, $\Psi$ has the form $\Psi=C'^*\Afr_P^*C$ with 
$\wh{S}C=Id_{E_N^+}$ and $C \wh{S}=S$ (remark that the matrix $I^\times$ is the number 1 here). Note 
that $\Afr_P$ is invertible as a consequence of the ellipticity of $P$. 

Now, suppose that $(1-S) \Afr_P(1-S)=0$ and $S\, \Afr_P^{-1}\, S=0$. We define
$\Psi=-\wh{(1-S)}\,\Afr_P \,C$. This is a homeomorphism from $H^s(E_N^+)$ onto $H^s(E_N^-)$. Indeed, 
if we set $\Psi^{-1} := - \wh {S}\, \Afr_P^{-1} \,C'$, we get
\begin{align*}
\Psi \circ \Psi^{-1} &= \wh{(1-S)} \,\Afr_P \,C \,\wh S \,\Afr_P^{-1}\, C' 
=  \wh {(1-S)}\,\Afr_P \,S \,\Afr_P^{-1} \,C' \\
&= \wh {(1-S)}\, \Afr_P \,\big(S +(1-S)\big) \,\Afr_P^{-1} \,C' = \wh {(1-S)} \, C' = Id_{E_N^-}
\end{align*}
and we also have $\Psi^{-1} \circ \Psi = Id_{E_N^+}$ using $S\, \Afr_P^{-1}\, S=0$.
Moreover,  
$$
\Psi \, \wh S = -\wh{(1-S)} \, \Afr_P \, S = -C'^* \,\Afr_P\, (S +(1-S)) = -C'^* \,\Afr_P\,.
$$
As a consequence, \eqref{eqadjoint} is satisfied and the if part of the
assertion follows.

Conversely, suppose that $P_T$ is selfadjoint. From Green's formula \eqref{green}, we get 
$(\Afr_P\, \ga_0 u , \ga_0 v)_N=0$ for any $u,v\in \Dom P_T$. Since 
$\ga_0 \, : \, H^1(E) \rightarrow H^{1/2}(E_N)$ is surjective, $(\Afr_P\, (1-S)\psi , (1-S) \phi)_N=0$ 
for any $\psi,\phi \in H^{1/2}(E_N)$ and thus $(1-S)\, \Afr_P \,(1-S)=0$. Again, from 
\cite[Theorem 1.6.11]{Grubb} we 
get that $\Psi:= C'^* \,\Afr_P^* \, C$ is a homeomorphism from $H^s(E_N^+)$ onto $H^s(E_N^-)$ and we 
check as before that $\Psi^{-1}:= - \wh {S} \,\Afr_P^{-1}\,C'$ is a right-inverse of $\Psi$, and thus, is the 
inverse of $\Psi$. The equation $\Psi^{-1}\circ \Psi = \Id_{E_N^+}$ yields 
$\wh S\, \Afr_P^{-1} \,S \,\Afr_P \,C=0$, which gives $\wh S\,\Afr_P^{-1} \,S\, \Afr_P \,S=0$. Thus, 
\begin{align*}
\wh S\, \Afr_P^{-1} \,S \,\Afr_P = \wh S\, \Afr_P^{-1} \,S \,\Afr_P \,(1-S) = 
\wh S \,\Afr_P^{-1} \,\big(S +(1-S) \big)\,\Afr_P \,(1-S)= \wh S \,(1-S) =0
\end{align*}
so $S\, \Afr_P^{-1}\, S=0$. 

$(ii)$ Clearly, $C^\infty(\ol M)$ is represented as bounded operators on $L^2(E)$ by left multiplication. Since 
$P_T$ is a selfadjoint unbounded operator on $L^2(E)$, $(P_T-i)^d$ is a bijective operator from 
$\Dom P_T^d$ onto $L^2(E)$.

The system $\set{P_+,\,T}$ being elliptic, it follows from Proposition \ref{propcomposition} that 
$\Dom {P_T}^d$ is a closed subset of $H^d(E)$ and ${P_T}^d$ is continuous from $\Dom {P_T}^d$, with 
the topology of $H^d(E)$, into $L^2(E)$. Using the inverse mapping theorem as in the proof of Lemma
\ref{RegularityLem2} $(ii)$, we see that $(P_T-i)^{-d}$ is a topological isomorphism from $L^2(E)$ onto 
$\Dom {P_T}^d$ (with the induced topology of $H^d(E)$).
Again, \cite[A.4 Lemma]{Grubb} implies that $(P_T-i)^{-d}$ is compact, and the singular values 
$\mu_n(|P_T|)$ are $\mathcal{O}(n^{1/d})$. 

Let $a\in C^\infty(\ol M)$. In particular $\Dom [P_T,a ] = \Dom P_T$ and if $\psi \in \Dom P_T$, 
$[P_T,a] \psi = [P_+,a]\psi$. By Lemma \ref{leftoverlem}, $[P_+,a] = [P,\wt a]_+$ where 
$\wt a \in C^\infty(\wt M)$ is such that $(\wt a)_+=a$.
Thus, since $[P,\wt a]$ is a pseudodifferential operator of order 0 in $(\wt M,\wt E)$ satisfying the 
transmission property, $[P_+,a]$ is continuous from $H^s(E)$ into $H^{s}(E)$ for any $s>-\half$. In 
particular, $[P_T,a]$ extends uniquely as a bounded operator $da$ on $L^2(E)$.

$(iii)$ By $(ii)$ $\big(C^\infty(\ol M), L^2(E), P_T\big)$ is a spectral triple. Thus, since $\A_{P_T}$ is a $*$-subalgebra of $C^\infty(\ol M)$, $\big(\A_{P_T}, L^2(E), P_T\big)$ is also a spectral triple (of the same dimension).

 Let $A \in \A_{P_T} \cup d\A_{P_T}$. It is clear that $A$ sends $\H^\infty_{P_T}$ into itself. We 
denote $B$ the associated $\H^\infty_{P_T}$-bounded operator (so $\ol B=A$).

Clearly, $H^k(E)$ is a Sobolev scale associated to $L^2(E)$ and $P_T$. The result will follow by Lemma 
\ref{RegularityLem2} if we check that for any $k\in \N$, $B^{(k)}:=[P_T^2,\cdot]^k(B)$ 
is continuous from $\H^\infty_{P_T}$ with the $H^k(E)$-topology, into $\H^\infty_{P_T}$ with the 
$L^2(E)$-topology. Since $P$ is differential, Lemma \ref{leftoverlem} yields for any
$\psi\in \H^\infty_{P_T}$ 
$$
B^{(k)}\psi = [P_T^2,\cdot]^k (B)  \psi=[(P_+)^2,\cdot]^k(B) (\psi)= \big(\,[P^2,\cdot]^k(\wt B)\, \big)_+ (\psi) 
$$
where $\wt B$ is a differential operator of order 0 on $(\wt M,\wt E)$ satisfying $\wt B _+ = \ol B=A$. Since 
$P^2$ has scalar principal symbol, $[P^2,\cdot]^k(\wt B)$ is a differential operator of order $k$. The 
claim follows.

$(iv)$ Suppose that $\A$ is a subalgebra of $C^\infty(M)$ such that $(\A,L^2(\ol M,E), P_T)$ is a regular 
spectral triple, $\A$ acting on $L^2(\ol M,E)$ by left multiplication. By regularity, a direct application of the 
proof of $(3)\Rightarrow (4)$ in \cite[Lemma 2.1]{CReconstruction} yields
$a\H^\infty_{P_T} \subseteq \H^\infty_{P_T}$ for any $a\in \A$. This shows that $\A\subseteq \A_{P_T}$.
\end{proof}

Note that if $S=1$ or $S=0$ then by previous theorem $(i)$, $P_T$ is not selfadjoint.

\begin{remark}
The hypothesis ``$P$ is a differential operator'' in Theorem \ref{thmboundary} $(iii)$ is crucial in the sense 
that if $P$ is a non-differential pseudodifferential operator, some non-vanishing leftovers may appear in 
$B^{(k)}$ and destroy the continuity of order $k$. 

This phenomenon does not appear in the boundaryless case since it stems from the singularities 
generated by the cut-off operator $e^+r^+$ at the boundary.
\end{remark}

We conclude this section with a simple one-dimensional example:

\begin{remark}
 Let $\wt M = \mathbb{S}^1$ and $\ol M=\set{(x,y)\in \mathbb S^1 \ : \ x\geq 0}\simeq [-\tfrac{\pi}{2},\tfrac{\pi}{2}]$.  Define $P$ on $H^1(\wt M,\C^2)$ by $ \twobytwo{0}{\tfrac{d}{d\th}}{-\tfrac{d}{d\th}}{0} $, where $\th$ is the polar coordinate $\th \mapsto (\cos \th,\sin\th)$, and let $T=S\ga_0$ where $S=\twobytwo{1}{0}{0}{0}$. The system $(P_+,T)$ is elliptic of order 1 and $P_T$ is selfadjoint since $\Afr_P=\eps \twobytwo{0}{1}{-1}{0}=-\Afr_P^{-1}$ and $(1-S)\Afr_P=\Afr_P S$ where $\eps$ is the function on $N=\del \ol M$ such that $\eps(0,1)=1$ and $\eps(0,-1)=-1$. 
  
  By Theorem \ref{thmboundary}, $(\A_{P_T},L^2(\ol M,\C^2),P_T)$ is a regular spectral triple of dimension 1.
  
 The smooth domain $\H^\infty_{P_T}$ is equal to the set of all $\psi=(\psi_1,\psi_2)\in C^\infty(\ol M,\C^2 )$ such that for all $k\in \N$, $(\psi_1^{(2k)})_{|N}=0$ and $(\psi_2^{(2k+1)})_{|N}=0$, where $\phi^{(p)}$ for all $p\in \N$ denotes $(\tfrac{d}{d\th})^p \phi$.  
 
 Moreover, the algebra $\A_{P_T}$ is equal to the set of all smooth functions $a\in C^\infty(\ol M)$ such that for all $k\in \N$, $(a^{(2k+1)})_{|N}=0$. For example, $\th\mapsto \sin \th \in \A_{P_T}$ while $\th\mapsto \cos\th\notin \A_{P_T}$. In particular $\A_{P_T}$ is strictly included in $C^\infty(\ol M)$.
 
  We conjecture that $\A_{P_T}$ is different from $C^\infty(\ol M)$ for any differential operator of first order $P$ and any normal trace operator $T=S \ga_0$ such that $P_T$ is selfadjoint and $\set{P_+,T}$ is elliptic. 
\end{remark}

\subsection{Case of a Dirac operator}
\label{CasDirac}

We now assume that $\wt M$ is a riemannian manifold with metric $g$, $d=\dim \wt M$ is even and $\wt E$ has 
a Clifford module structure. This means that there is smooth map $\ga : T \wt M \to \End(\wt E)$ such that for any 
$x,y\in T\wt M$, $\ga(x)\ga(y) + \ga(y)\ga(x)=2g(x,y)$. 

We also fix on $\wt E$ a Hermitian inner product $(\cdot,\cdot)$ such that $\ga(x)^*=\ga(x)$ for any $x\in T\wt M$. 
Note that for a given $\ga$, such Hermitian inner product always exists. We fix a connection $\nabla$ on $\wt E$ 
such that for any $x,y \in T\wt M$ and $v,w \in \wt E$,
\begin{align*}
&x(v,w) = (\nabla_x v ,w) + (v,\nabla_x w)\, , \\
&\nabla_x \big(\ga(y) v \big) = \ga \big(\nabla_x^{LC}(y) \big) v + \ga(y)\nabla_x v\, .
\end{align*}
where $\nabla^{LC}$ is the Levi-Civita connection of $(\wt M,g)$. By \cite[Lemma 1.1.7]{Gilkey2}, such connection 
always exists.
The Dirac operator associated to the connection $\nabla$ is locally defined by $\DD := i\sum_j \ga_j \nabla_j$ where  
our conventions are the following: $(e):=\set{e_1,\ldots,e_d}$ is a local orthonormal frame of the 
tangent space where $e_d$ is the inward pointing unit vector field,  
$\ga_j:=\ga(e_j)$ and $\nabla_j:=\nabla_{e_j}$.

The chiral boundary operator $\chi$ investigated in \cite{CC2, Tadpole} is defined the following way: 
choose $\chi:=(-i)^{d/2+1}\ga_1\cdots \ga_{d-1}$, so $\chi_{\wt M}:=i\chi  \ga_d$ (also sometimes denoted 
$\ga_{d+1}$) is the natural chirality of $\wt M$. 
Then $\{\chi,\ga_d\}=0$ while $[\chi,\ga_n]=0$, $\forall n \in \set{1,\cdots,d-1}$. Let $\Pi_{\pm}$ be the 
projections associated to eigenvalues $\pm1$ of $\chi$ ($\chi=\chi^*$ and $\chi^2=1$). 
This defines a particular case of elliptic boundary condition 
\cite[Lemma 1.4.9, Theorem 1.4.11, Lemma 1.5.3]{Gilkey2}. The map $S:=\Pi_-$ is an 
idempotent selfadjoint endomorphism on $N$ and $1-S=\Pi_+$. Since 
$\Afr_{\DD}=-i\ga_d=-\Afr_{\DD}^{-1}$ by Remark \ref{casDirac}, we get 
$(1-S)\,\Afr_{\DD}=\Afr_{\DD}\,S$ and $\DD_T$, where $T:=S_{|N}\ga_0$, is selfadjoint by \eqref{Afr}. This type of 
boundary condition is chosen of course to obtain a selfadjoint boundary Dirac operator, which is not the case if we  
choose a Dirichlet or a Neumann--Robin condition.

In this framework, Theorem \ref{thmboundary} shows the following, a fact not considered in \cite{CC2}:

\begin{theorem}
\label{CasduDirac}
The triple $\big(\A_{\DD_T} , L^2(\ol M,E),\DD_T\big)$ is a regular spectral triple of dimension $d$.
\end{theorem}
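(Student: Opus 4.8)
The plan is to read Theorem \ref{CasduDirac} as a direct instance of Theorem \ref{thmboundary} with $P=\DD$ and $S=\Pi_-$, so that the whole task reduces to checking that the Dirac data meet the hypotheses of that theorem, the substantive analysis being already available. First I would record that $\DD=i\sum_j \ga_j\nabla_j$ is a first order differential operator on $\wt M$, hence an element of $\Psi^1(\wt E)$ which, being differential, automatically satisfies the transmission condition. Its symmetry on $L^2(\wt E)$ follows from the two compatibility relations imposed on $\nabla$ together with $\ga(x)^*=\ga(x)$: these are precisely what force the formal adjoint of $i\sum_j\ga_j\nabla_j$ to equal itself. Thus the opening hypothesis of Theorem \ref{thmboundary} holds.

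Next I would invoke the facts assembled in the discussion preceding the statement. Since $\chi^*=\chi$ and $\chi^2=1$, the endomorphism $S:=\Pi_-$ is idempotent and selfadjoint on $N$, with $1-S=\Pi_+$, and by the cited results of Gilkey the system $\set{\DD_+,\,T}$ with $T:=S\ga_0$ is an elliptic boundary system of order one. Selfadjointness of $\DD_T$ is then secured by Theorem \ref{thmboundary}$(i)$: as $\Afr_{\DD}=-i\ga_d=-\Afr_{\DD}^{-1}$ by Remark \ref{casDirac}, and $\chi$ anticommutes with $\ga_d$ while commuting with $\ga_1,\dots,\ga_{d-1}$, one obtains $(1-S)\,\Afr_{\DD}=\Afr_{\DD}\,S$, whence $(1-S)\Afr_{\DD}(1-S)=0$ and $S\,\Afr_{\DD}^{-1}\,S=0$, i.e.\ condition \eqref{Afr}. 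Part $(ii)$ of Theorem \ref{thmboundary} then already gives that $\big(C^\infty(\ol M),L^2(\ol M,E),\DD_T\big)$, and a fortiori $\big(\A_{\DD_T},L^2(\ol M,E),\DD_T\big)$, is a spectral triple of dimension $d$.

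It remains to feed the regularity clause $(iii)$, for which I must supply the one genuinely computational input: that $\DD$ is differential (clear) and that $\DD^2$ has scalar principal symbol. Here I would compute the principal symbol $\sigma_1(\DD)(x,\xi)=-\ga(\xi^\sharp)$, where $\xi^\sharp$ is the metric dual of $\xi$, so that $\sigma_2(\DD^2)(x,\xi)=\ga(\xi^\sharp)^2=g(\xi^\sharp,\xi^\sharp)\,\Id_E=|\xi|_g^2\,\Id_E$ by the Clifford relation $\ga(x)\ga(y)+\ga(y)\ga(x)=2g(x,y)$; this is scalar in the sense fixed before Proposition \ref{propmtilde}. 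With this, Theorem \ref{thmboundary}$(iii)$ applies verbatim and yields the regularity of $\big(\A_{\DD_T},L^2(\ol M,E),\DD_T\big)$.

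The proof therefore carries no essential obstacle of its own: the hard analytic content---selfadjointness through the Green matrix, ellipticity of the realization, stability of the domains under powers, and the leftover computation of Lemma \ref{leftoverlem} that upgrades regularity from the boundaryless case---has been discharged in Theorem \ref{thmboundary} and the lemmas of Section \ref{SpectralTriple}. The only points demanding attention are the bookkeeping verifications that $\Pi_-$ satisfies the idempotent--selfadjoint and ellipticity requirements and that the scalar-principal-symbol condition survives for $\DD^2$, which is exactly where the Clifford structure of $\wt E$ is used.
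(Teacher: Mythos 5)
Your proposal is correct and takes essentially the same route as the paper: Theorem \ref{CasduDirac} is there obtained precisely by applying Theorem \ref{thmboundary} with $P=\DD$ and $S=\Pi_-$, with the hypotheses checked in the discussion preceding the statement (symmetry and the transmission condition for the differential operator $\DD$, ellipticity of $\set{\DD_+,S\ga_0}$ via the cited results of Gilkey, and selfadjointness from $\Afr_{\DD}=-i\ga_d=-\Afr_{\DD}^{-1}$ together with $(1-S)\Afr_{\DD}=\Afr_{\DD}S$, which gives \eqref{Afr}). Your added verifications, notably the Clifford computation showing $\sigma_2(\DD^2)=g(\xi^\sharp,\xi^\sharp)\Id_E$ so that Theorem \ref{thmboundary}$(iii)$ applies, are exactly the inputs the paper relies on.
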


\subsection{A spectral triple on the boundary $N:=\del M$}
\label{tripletbord}

We intend here to construct a spectral triple on the boundary of the manifold $\ol M$. The idea is to define 
a transversal differential elliptic operator on the boundary from a differential elliptic operator on
$\wt M$ and a Riemannian structure.
 
Let $ \DD$ be an elliptic symmetric differential operator of order one on $\wt M$. Recall that a pair $(g,h)$ 
where $g$ is a metric on $\wt M$ and $h$ is an hermitian pairing on $\wt E$ is said of the product-type 
near the boundary $N$ if there exists a tubular neighborhood $U$ of $N$ in $\wt M$ such that 
$(U,g_{|U})$ is isometric to $]-\eps,\eps[\times N$ for some $\eps>0$ with metric $dx^2 \otimes g_N(x)$ 
where $g_N(x)$ is a smooth family of metrics on $N$, and $h$ is such that 
$h(x):= \F_* h_{|\set{x} \times N}$ is independent of $x\in ]-\eps,\eps[$, where $\F$ is the bundle 
isomorphism between $\wt E_{|U}$ and $]-\eps,\eps[\times E_N$. 

As observed in \cite[Section 2.1]{BLZ}, we can always suppose that the pair 
$(g,h)$ on $\wt M$ is of the product type near the boundary $N$. The idea is to let the
coefficients of $ \DD$ to absorb any non-product behavior of $(g,h)$. More precisely, if we start with a 
general pair $(g,h)$ on $\wt M$ and if $(g_1,h_1)$ is pair of product type near the boundary, we can 
define $s \in C^\infty\big(\wt M,L(\wt E)\big)$ such that for any $p\in \wt M$, $s(p)$ is the isomorphism 
between the two equivalent quadratic spaces $\big(\wt E_p,h(p)\big)$ and $\big(\wt E_p, h_1(p)\big)$. 
We then define the following application
$$
\Psi : \, u \in C^\infty(\wt M,\wt E) \to \rho^{-1/2}s^{-1} u  \in C^\infty(\wt M,\wt E) 
$$
where $\rho:=d\om(g_1)/d\om(g)$ and $d\om(g)$ (resp. $d\om (g_1)$) is the volume form associated to 
the metric $g$ (resp. $g_1$). This map extends as an isometry between $L^2(\wt M,\wt E,g,h)$ and 
$L^2(\wt M,\wt E,g_1,h_1)$. 
Thus, we can deal with the differential operator $\Psi \, \DD \, \Psi^{-1}$ which is unitarily equivalent to 
$\DD$. 

From now on, we suppose that $(g,h)$ on $(\wt M,\wt E)$ is of product-type.

Thus, $ \DD$, as an operator in $L^2(U,E_{|U})$, is unitarily equivalent (see for instance 
\cite{APS,BL,BLZ}) to an operator of the form 
$$
I_x \circ ( \tfrac{d}{dx} + A_x )
$$
where $I\in C^\infty \big(]-\eps,\eps[,GL(E_N)\big)$, $A\in
C^\infty \big(]-\eps,\eps[,\Diff^1(N,E_N)\big)$, each $A_x$ being elliptic and each $I_x$
being an anti-selfadjoint endomorphism. 
Note that, when $ \DD$ is a Dirac type operator (in the sense that its square 
$ \DD^2$ has $g_x(\xi,\xi)\,\text{Id}$ for principal symbol), $I$ can be chosen as constant $I_x=I_0$ with  
$I_0^2=-1$.

We then define the tangential operator 
$$
\DD_N:= I_0 \circ A_0,
$$
which is an elliptic symmetric first-order differential operator on $(N,E_N)$.

Note that ${\DD_N}^2$ has a scalar principal symbol if $\DD^2$ has.  By Proposition \ref{propmtilde}, we 
directly obtain a spectral triple on the boundary:

\begin{prop}
Let $\DD$ be an elliptic symmetric differential operator of order one on $\wt M$ such that $\DD^2$ has a 
scalar principal symbol. Then $\big(C^\infty(N),L^2(E_N),\DD_N\big)$ is a regular spectral triple of 
dimension $d-1$.
\end{prop}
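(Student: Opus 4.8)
The plan is to deduce the statement directly from Proposition~\ref{propmtilde}, applied not on $\wt M$ but on the closed manifold $N$. All the genuine work has already been carried out: the tangential operator $\DD_N$ has been built, and Proposition~\ref{propmtilde} furnishes regularity for \emph{any} elliptic symmetric first-order operator whose squared principal symbol is scalar on a compact boundaryless manifold. So the task reduces to matching hypotheses.

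First I would recall from the background section that $N=\partial\ol M$ is a smooth compact manifold \emph{without} boundary, of dimension $d-1$, and that $E_N$ is a smooth hermitian vector bundle over $N$. This places us exactly in the setting of Proposition~\ref{propmtilde}, with $\wt M$ replaced by $N$, $\wt E$ by $E_N$, and the ambient dimension $d$ replaced by $d-1$.

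Next I would verify the three conditions imposed on the operator. By the construction preceding the statement, $\DD_N:=I_0\circ A_0$ is a first-order differential operator on $(N,E_N)$ that is elliptic (ellipticity of $A_0$ together with invertibility of $I_0$) and symmetric. Moreover, as recorded immediately above the statement, ${\DD_N}^2$ has scalar principal symbol whenever $\DD^2$ does, which is the standing hypothesis. Hence $\DD_N\in\Psi^1(E_N)$ meets all the assumptions of Proposition~\ref{propmtilde}.

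Finally, invoking Proposition~\ref{propmtilde} gives that $\big(C^\infty(N),L^2(E_N),\DD_N\big)$ is a regular spectral triple, of dimension $\dim N=d-1$. There is no substantial obstacle here: the only point deserving a moment's attention is the passage of the scalar-principal-symbol property from $\DD$ to $\DD_N$, but this has already been noted just before the statement, so the conclusion follows at once.
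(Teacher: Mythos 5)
Your proposal is correct and follows exactly the paper's own route: the paper likewise derives this proposition immediately from Proposition~\ref{propmtilde} applied to the closed manifold $N$, having already recorded that $\DD_N=I_0\circ A_0$ is an elliptic symmetric first-order differential operator on $(N,E_N)$ and that $\DD_N^2$ inherits the scalar principal symbol from $\DD^2$. No discrepancy to report.
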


Remark that if $\DD$ is a classical Dirac operator associated to a Clifford module, $\DD_N$ corresponds 
to the hypersurface Witten--Dirac operator and has been intensively studied in \cite{Ginoux,HMR}.

\section{Reality and tadpoles}
\label{Reality}

\subsection{Conjugation operator and dimension spectrum on a commutative triple}

\begin{definition}
A commutative spectral triple $(\A,\H,\DD)$ provided with a chirality operator $\chi$ (a $\Z/2$ grading on $\H$ 
which anticommutes with $\DD$) is said to be real if there exists an antilinear isometry $J$ on $\H$ such that 
$JaJ^{-1}=a^*$ for $a \in \A$, $J\DD=\epsilon\DD J$, $J\chi=\epsilon' \chi J$, and $J^2=\epsilon''$ where 
$\epsilon, \epsilon', \epsilon''$ are signs given by the table quoted in \cite{ConnesReality,CGravity,Polaris}.
\end{definition}

As we shall see, it turns out that the existence of an operator $J$ only satisfying $J\DD=\pm \DD J$ and 
$J a J^{-1} = a^*$ is enough to impose vanishing tadpoles at any order. We thus introduce a weak 
definition of conjugation operator:

\begin{definition}
\label{charge}
A conjugation operator on a commutative spectral triple $(\A,\H,\DD)$ is an antilinear isometry $J$ on 
$\H$ such that $JaJ^{-1} = a ^*$ for any $a\in \A$ and $J\DD = \eps \DD J$ with $\eps\in \set{-1,1}$.
\end{definition}

In order to be able to compute the spectral action and the corresponding tadpoles on the spectral triples 
$\big(\A_{P_T}, L^2(E), P_T\big)$, we shall need the noncommutative integral and a characterization of 
the dimension spectrum. We first recall a few definitions of the Chamseddine--Connes pseudodifferential
calculus \cite{CC1}. From now on, $(\A,\H,\DD)$ is a regular commutative spectral triple of dimension $d$ 
endowed with conjugation operator $J$.

We shall use the following convention: $X^{-s}$ when $\Re(s)>0$ actually means $(X+P_0)^{-s}$ where $P_0$ is 
the orthogonal projection on the kernel of $X$. Note that in our case, $J \A J^{-1}=\A$, by definition of $J$. 

For any $\a\in \R$, we define $OP^\a:=\set{T\ : \ |\DD|^{-\a}T \in \cap_k \Dom \delta^k}$, where 
$\delta=\ol {[|\DD|,\cdot]}$ as defined in section 2.

\begin{lemma}
\label{Atilde}
Suppose that the triple $(\A,\H,\DD)$ satisfies the first-order condition $[da,b]=0$ for any $a,b\in \A$ and 
let $A$ be a one-form (a finite sum of terms of the form $adb$, for $a,b\in \A$). Then $A^* =- \eps J A J^{-1}$. In 
particular, if $A$ is selfadjoint, $A + \eps J A J^{-1}=0$.
\end{lemma}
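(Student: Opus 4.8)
The plan is to compute $JAJ^{-1}$ directly for a single term $A=a\,db$ and then extend by linearity. First I would write $db = [\DD,b]$ (viewed on the relevant domain), so that $A = a[\DD,b] = a(\DD b - b\DD)$. The core computation is to push $J$ through this expression using the two defining relations of the conjugation operator, namely $JaJ^{-1}=a^*$ for every $a\in\A$ and $J\DD=\eps\DD J$ (equivalently $J\DD J^{-1}=\eps\DD$, since $J$ is an antilinear isometry hence invertible).

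Carrying this out, I would compute
\begin{align*}
J(a\,db)J^{-1} &= J a (\DD b - b\DD) J^{-1} \\
&= (JaJ^{-1})(J\DD J^{-1})(JbJ^{-1}) - (JaJ^{-1})(JbJ^{-1})(J\DD J^{-1}) \\
&= \eps\, a^* \DD b^* - \eps\, a^* b^* \DD \\
&= \eps\, a^*[\DD,b^*] = \eps\, a^*\,db^*.
\end{align*}
The key step above inserts $J^{-1}J=\Id$ between the factors $a$, $\DD$, $b$ and uses $J\DD J^{-1}=\eps\DD$ in the middle slot. I must be careful that antilinearity of $J$ does not introduce a conjugation on the scalar $\eps$, but since $\eps\in\{-1,1\}\subset\R$ this is harmless. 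Next I would compare this with $A^*=(a\,db)^*$. Using $(db)^*=[\DD,b]^*=[\DD,b^*]^{**}$-type manipulation, more directly $(a[\DD,b])^*=[\DD,b]^*a^*=-[\DD,b^*]a^*=-(db^*)a^*$, so that $A^*=-(db^*)\,a^*$. At this point the expressions for $\eps JAJ^{-1}=a^*\,db^*$ and $A^*=-(db^*)a^*$ do not match term-by-term as written, and bridging this gap is where the hypotheses come in.

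The main obstacle, and the reason the first-order condition $[da,b]=0$ is assumed, is precisely to commute the zeroth-order operator $db^*$ past $a^*$. The first-order condition says $[\,db^*,a^*\,]=0$ (reading the hypothesis $[dx,y]=0$ with $x=b^*,y=a^*$, both in $\A$ since $\A$ is a $*$-algebra), so $(db^*)a^* = a^*(db^*)$. Substituting gives $A^* = -a^*\,db^* = -\eps\,JAJ^{-1}$, which is the claimed identity $A^*=-\eps JAJ^{-1}$ for a single term; linearity in the finite sum defining a general one-form then yields the result for arbitrary $A$. Finally, the ``in particular'' clause is immediate: if $A=A^*$, then $A=-\eps JAJ^{-1}$, and multiplying through (or rearranging) gives $A+\eps JAJ^{-1}=0$. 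I would also note in passing that all these commutators are a priori defined only on $\Dom\DD$ or on the smooth domain, but since each $db$ extends to a bounded operator by the spectral triple axiom, the identities hold as bounded-operator identities on all of $\H$, so no domain subtleties obstruct the conclusion.
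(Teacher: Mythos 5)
Your proof is correct and takes essentially the same route as the paper: the paper's entire proof is the single phrase ``Direct computation,'' and your argument --- conjugating $a\,db$ by $J$ to get $\eps\,a^*\,db^*$, computing $A^*=-(db^*)a^*$, and invoking the first-order condition to commute $db^*$ past $a^*$ --- is precisely that computation, with the domain and antilinearity caveats handled appropriately.
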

\begin{proof}
Direct computation.
\end{proof}

\begin{definition}
\label{defpseudo} Let $\DD(\A)$ be the polynomial algebra generated by $\A$, $\DD$.

The operator $T$ is said to be pseudodifferential if there exists $d\in \Z$ such that for any 
$N\in \N$, there exist $p\in \N$, $P\in \DD(\A)$ and $R\in OP^{-N}$ ($p$, $P$ and $R$ may depend on 
$N$) such that $P\,\DD^{-2p}\in OP^d$ and
$$
T=P\,\DD^{-2p}+R\, .
$$
Define $\Psi(\A)$ as the set of pseudodifferential operators and $\Psi^k(\A):=\Psi(\A)\cap OP^k$.
\end{definition}

\begin{remark}
We use here the algebra of pseudodifferential operators defined by Chamseddine--Connes in 
\cite{CC1} and denoted $\Psi_1(\A)$ in \cite{MCC,Tadpole}. This algebra does not a priori contain the 
operators of type $|\DD|^{k}$ or $|\DD+A|^k$ ($A$ one-form) for $k$ odd, contrarily to the larger algebra 
considered in \cite{MCC}. 
\end{remark}

{\it The dimension spectrum} $Sd(\A,\H,\DD)$ of a spectral triple has initially been defined in
\cite{Cgeom, CM} and is adapted here to the definition of pseudodifferential operator.

\begin{definition}  A spectral triple $(\A,\H,\DD)$ is said to be simple if all generalized zeta functions 
$\zeta_\DD^P := s\mapsto \Tr \big(P |\DD|^{-s}\big)$, where $P$ is any pseudodifferential 
operator in $OP^0$, are meromorphic on the complex plane with only simple poles. 
The set of these poles is denoted $Sd(\A,\H,\DD)$, and called the dimension spectrum of $(\A,\H,\DD)$.
\end{definition}

When $(\A,\H,\DD)$ is simple, $\ncint T := \Res_{s=0} \Tr \big(T |\DD|^{-s}\big)$
is a trace on the algebra $\Psi(\A)$. Note that for $k\in \N$, 
$$
\underset{s=d-k}{\Res} \,\Tr |\DD+A|^{-s} = \ncint K(A)|\DD|^{-(d-k)} = \ncint |\DD+A|^{-(d-k)}
$$ 
where $K(A)$ is a pseudodifferential operator in the sense of Definition \ref{defpseudo} (see 
\cite[Lemma 4.6, Proposition 4.8]{MCC}).  

Suppose now that $(\A,\H,\DD)$ is simple and that $Sd(\A,\H,\DD)\subseteq d-\N$. Moreover, we suppose, as in 
\cite[p. 197]{ConnesMarcolli} that for all selfadjoint one-forms $A$, $\Tr e^{-t (\DD+A)^2}$ has a complete asymptotic 
expansion in real powers of $t$ when $t \downarrow 0$. Thus, 
$S(\DD+A,\Phi,\Lambda)= \Tr \Phi((\DD+A)^2/\Lambda^2)$ satisfies \eqref{formuleaction} where 
$\zeta_{\DD_A}(0)=\zeta_{\DD}(0)+\sum_{q=1}^d \tfrac{(-1)^{q}}{q}\ncint (A \DD^{-1})^q$ (see \cite{CC1}). 

We now obtain from \cite{Tadpole}:
\begin{prop}
\label{propspecaction} Let $A$ be a selfadjoint one-form.
The linear term in $A$ in the $\Lambda^{d-k}$ coefficient of \eqref{formuleaction}, denoted 
$\Tad_{\DD+A}(d-k)$ and called tadpole of order $d-k$, satisfies:
\begin{align*}
&\Tad_{\DD+A}(d-k) =-(d-k)\ncint A \DD |\DD|^{-(d-k) -2}, \quad \forall  k\neq d,\\    
& \Tad_{\DD+ A}(0)=-\ncint  A \DD^{-1}.
\end{align*}
If the triple satisfies the first-order condition $[da,b]=0$ for any $a,b\in \A$, then (see 
\cite[Corollary 3.7]{Tadpole} and Lemma \ref{Atilde}) $\Tad_{\DD+A}(d-k)=0$ for any $k\in \N$.
\end{prop}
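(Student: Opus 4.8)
The plan is to obtain the two displayed identities by extracting the $A$-linear part of the relevant coefficients in \eqref{formuleaction}, and then to kill every such term using the conjugation operator $J$ together with Lemma \ref{Atilde}. I would treat the positive powers and the constant term separately, and for the vanishing I would show that each tadpole is simultaneously real and purely imaginary.

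For the formulas, when $k\neq d$ the relevant factor in the $\Lambda^{d-k}$ coefficient is $\ncint|\DD+A|^{-(d-k)}$, which (as recalled just before the statement, using \cite{MCC}) equals $\Res_{s=d-k}\Tr|\DD+A|^{-s}$. To isolate its linear part I would replace $A$ by $tA$, write $(\DD+tA)^2 = \DD^2 + t(\DD A + A\DD) + t^2A^2$, and differentiate at $t=0$. Under the trace the first-order variation of a non-integer power is governed by the cyclic identity $\frac{d}{dt}\big|_{t=0}\Tr(\DD^2+tB)^{-s/2} = -\tfrac{s}{2}\Tr\big(B\,|\DD|^{-s-2}\big)$; with $B=\DD A+A\DD$, and using that $\DD$ commutes with $|\DD|^{-s-2}$, the two summands coincide by cyclicity and give $-s\,\Tr\big(A\DD\,|\DD|^{-s-2}\big)$. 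Taking the residue at $s=d-k$, pulling the analytic factor $-s$ out, and recognizing $\Res_{s=d-k}\Tr\big(A\DD|\DD|^{-s-2}\big)=\ncint A\DD|\DD|^{-(d-k)-2}$ yields the first formula. For $k=d$ the $\Lambda^0$ coefficient is $\Phi(0)\zeta_{\DD_A}(0)$, and the expansion $\zeta_{\DD_A}(0)=\zeta_\DD(0)+\sum_{q=1}^d\frac{(-1)^q}{q}\ncint(A\DD^{-1})^q$ recalled from \cite{CC1} has $-\ncint A\DD^{-1}$ as its only $A$-linear ($q=1$) contribution, giving the second formula.

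For the vanishing under the first-order condition I would write $X$ for $A\DD|\DD|^{-(d-k)-2}$ (resp. $A\DD^{-1}$). Selfadjointness of $A$ and $\DD$ and cyclicity of $\ncint$ give $\overline{\ncint X}=\ncint X^*=\ncint X$, so $\ncint X\in\R$. On the other hand, since $J$ is an antilinear isometry with $J|\DD|^{-s}J^{-1}=|\DD|^{-\bar s}$, an antiunitary change of basis gives $\ncint JXJ^{-1}=\overline{\ncint X}$; and conjugating $X$ by $J$, using $J\DD J^{-1}=\eps\DD$ together with $JAJ^{-1}=-\eps A$ (which is exactly Lemma \ref{Atilde} for selfadjoint $A$) and $\eps^2=1$, produces $JXJ^{-1}=-X$. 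Hence $\overline{\ncint X}=-\ncint X$, so $\ncint X$ is purely imaginary. Being both real and imaginary, $\ncint X=0$, and therefore $\Tad_{\DD+A}(d-k)=0$ for every $k$; this is the content of \cite[Corollary 3.7]{Tadpole}.

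The main obstacle is the first-order variation step: justifying the cyclic (Duhamel) identity for the non-integer power $(\DD^2+tB)^{-s/2}$ with non-commuting $\DD^2$ and $B$, and checking that differentiation in $t$ commutes with the trace and with taking the residue in $s$. I would lean on the Chamseddine--Connes pseudodifferential calculus and on the fact, from \cite{MCC}, that $|\DD+A|^{-(d-k)}$ is a pseudodifferential operator of the expected order, so that the relevant zeta functions are holomorphic near the pole and these formal manipulations are legitimate; the remaining algebra (collecting the two summands, the $\DD$--$|\DD|$ commutation, the residue bookkeeping) is then routine.
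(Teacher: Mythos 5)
Your proposal is correct and takes essentially the same route as the paper: the paper obtains this proposition directly from \cite{Tadpole}, where the two formulas come precisely from extracting the $A$-linear part of $\ncint|\DD+A|^{-(d-k)}$ and of $\zeta_{\DD_A}(0)$ within the Chamseddine--Connes calculus (your Duhamel variation in $t$ is that same computation), and the vanishing is \cite[Corollary 3.7]{Tadpole} combined with Lemma \ref{Atilde}, i.e.\ exactly your argument that $J$-conjugation sends $A\DD|\DD|^{-(d-k)-2}$ and $A\DD^{-1}$ to their negatives (making the noncommutative integral purely imaginary) while selfadjointness plus the trace property of $\ncint$ make it real. The one point you rightly flag --- justifying the differentiation under the trace and residue --- is exactly what the paper delegates to the expansions of \cite{MCC} and \cite{CC1}, so your treatment is at the same level of rigor as the original.
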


\subsection{Tadpoles on $\big(\A_{\DD_T} , L^2(\ol M,E),\DD_T\big)$}

It is known that real the commutative spectral triple based on the classical Dirac operator on a compact spin 
manifold of even dimension $d$ has no tadpoles \cite{Tadpole}. In fact, as we shall see, the same result applies in 
presence of a boundary. 

We now work in the setting of section \ref{CasDirac}. We suppose moreover that $\wt M$ is a spin manifold, $\wt E$ 
is the spin bundle and $\DD$ is the classical Dirac operator in the sense of Atiyah--Singer.
The spin structure brings us an antilinear isometry $J$ (the ordinary conjugation operator) 
satisfying $\DD J = J\DD$,
$JbJ^{-1}=b^*$ for all $b \in C^\infty(\wt M)$, and $J \ga_i = - \ga_i J$ for any $i\in \set{1,\cdots, d}$ 
\cite[Theorem 9.20]{Polaris}. Moreover, $J \chi_{\wt M} = \eps' \chi_{\wt M} J$, and thus $J\chi = \eps' \chi J$, where $\eps'=-1$ if $d/2$ is odd and $\eps'=1$ if $d/2$ is even. 

In particular, if $d/2$ is even then $J S= SJ$. 

If we set
$$
\wt J:= J\chi_{\wt M}
$$ 
then $\wt J$ is an antilinear isometry satisfying $\DD \wt J = - \wt J \DD$, $\wt J b \wt J^{-1} = b^*$ for all $b\in C^\infty(\wt M)$, and $\wt J \chi = -\eps' \chi \wt J$. As a consequence, if $d/2$ is odd, then $\wt JS = S \wt J$.

\begin{theorem}
\label{notad}
The spectral triple $\big(\A_{\DD_T} , L^2(\ol M,E),\DD_T\big)$ of Theorem \ref{CasduDirac} has no 
tadpoles.
\end{theorem}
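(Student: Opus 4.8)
The plan is to reduce the tadpole computation for the spectral triple with boundary to the algebraic identity provided by Proposition \ref{propspecaction}. By that proposition, if the triple $\big(\A_{\DD_T}, L^2(\ol M, E), \DD_T\big)$ satisfies the first-order condition $[da, b]=0$ for all $a,b\in \A_{\DD_T}$, and if it admits a conjugation operator $J$ in the sense of Definition \ref{charge} (an antilinear isometry with $J a J^{-1}=a^*$ and $J\DD_T = \eps \DD_T J$), then every tadpole $\Tad_{\DD_T + A}(d-k)$ vanishes. So the entire proof amounts to verifying these two hypotheses in the present boundary setting.

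First I would check the first-order condition. For $a, b \in \A_{P_T} \subseteq C^\infty(\ol M)$ acting by left multiplication, the commutator $da = [\DD_T, a]$ acts on $\H^\infty_{\DD_T}$ as the truncation $[\DD_+, a] = [\DD, \wt a]_+$ (using Lemma \ref{leftoverlem} as in the proof of Theorem \ref{thmboundary}(ii)), which is the Clifford multiplication by the differential $da$ of the function, a pointwise (zeroth-order) operator. Since $b$ is also a multiplication by a scalar function and multiplication operators by scalar functions commute pointwise, $[da, b]=0$ holds just as in the boundaryless case; the truncation $r^+(\cdot)e^+$ does not spoil this because both operators are local multiplications. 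This step should be routine.

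The main obstacle is constructing the conjugation operator $J$ on $L^2(\ol M, E)$ compatible with the boundary condition. The natural candidate is the restriction to $\ol M$ of the charge conjugation $J$ coming from the spin structure on $\wt M$, which already satisfies $J\DD = J\DD$ and $JbJ^{-1}=b^*$. The subtlety is that $J$ must preserve the domain $\Dom \DD_T = \{\psi \in H^1(E) : S\ga_0 \psi = 0\}$, equivalently it must commute (up to the boundary projection) with $S = \Pi_-$; this requires $JS = SJ$, or after replacing $J$ by $\wt J := J\chi_{\wt M}$, the relation $\wt J S = S\wt J$. The preceding discussion in the excerpt records exactly that $JS = SJ$ when $d/2$ is even while $\wt JS = S\wt J$ when $d/2$ is odd. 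Thus the plan is to select $J$ itself when $d/2$ is even and $\wt J$ when $d/2$ is odd; in each case the chosen operator is an antilinear isometry satisfying $JaJ^{-1}=a^*$, intertwines $\DD_T$ up to a sign $\eps = \pm 1$, and crucially preserves the boundary condition so that it restricts to a genuine conjugation operator for the realization $\DD_T$.

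With $J$ in hand and the first-order condition verified, the result follows immediately from the last clause of Proposition \ref{propspecaction}: every tadpole vanishes. I expect the only real work to be the domain-preservation check for $J$, i.e.\ confirming that the commutation with $S$ indeed guarantees $J(\Dom \DD_T) \subseteq \Dom \DD_T$ and likewise for the smooth domain $\H^\infty_{\DD_T}$, so that $J$ qualifies as a conjugation operator on the boundary triple rather than merely on the double $\wt M$. Once that is settled, the proof is essentially a citation of the abstract vanishing result.
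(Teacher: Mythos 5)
Your two checks --- the first-order condition and the construction of the conjugation operator $J'_+:=r^+J'e^+$ with $J'=J$ for $d/2$ even and $J'=\wt J$ for $d/2$ odd, including the domain-preservation computation via commutation with $S$ --- are exactly the two steps of the paper's proof, carried out the same way. However, your opening claim that ``the entire proof amounts to verifying these two hypotheses'' overlooks the standing hypotheses under which Proposition \ref{propspecaction} is stated: the triple must be \emph{simple}, its dimension spectrum must be contained in $d-\N$, and $\Tr e^{-t(\DD_T+A)^2}$ must admit a complete asymptotic expansion in real powers of $t$ as $t\downarrow 0$ for every selfadjoint one-form $A$. These are not cosmetic side conditions: without them the noncommutative integrals and the expansion \eqref{formuleaction} that \emph{define} the tadpoles do not even make sense for $\DD_T$, so there is nothing for the abstract vanishing result to be applied to.

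This verification is a genuine part of the proof, and it is nontrivial precisely because of the boundary. The paper handles it as follows: since $\DD$ is differential, Lemma \ref{leftoverlem} shows that every $B\in\DD(\A_{\DD_T})$ acts as the truncation of a differential operator on $(\ol M,E)$; then $s\mapsto\Tr\big(B|\DD_T|^{-s}\big)$ extends meromorphically with only simple poles in $\set{d+n-k \,:\, k\in\N}$ ($n$ the order of $B$) by the theory of elliptic boundary value problems \cite[Theorem 1.12.2]{Gilkey}, which gives simplicity and $Sd\subseteq d-\N$; and the heat expansion for $(\DD_T+A)^2$ follows from \cite[Theorems 1.4.5 and 1.4.11]{Gilkey2}. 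Both inputs rest on the ellipticity of the chiral boundary system $\set{\DD_+,T}$, and with a boundary the heat trace genuinely contains half-integer powers of $t$, so the dimension spectrum is all of $d-\N$ rather than $d-2\N$; none of this can be imported for free from the closed-manifold case. Your proposal silently assumes these analytic facts; adding this step (using the leftover argument to see that $\DD(\A_{\DD_T})$ consists of truncated differential operators, then citing Gilkey) would complete it, after which the remainder coincides with the paper's argument.
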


\begin{proof}
By Theorem \ref{CasduDirac}, we know that $\big(\A_{\DD_T} , L^2(\ol M,E),\DD_T\big)$ is a regular 
spectral triple of dimension $d$. In order to prove it is a simple spectral triple with dimension spectrum included
$\set{d-k \ : \ k\in \N}$, it is enough to check that for any $B\in \DD(\A_{\DD_T})$, the function
$\zeta^B_{\DD_T}(s):=\Tr \big(B |\DD_T|^{-s}\big)$ has only simple poles in $\Z$. Since
$\DD$ is differential, by Lemma \ref{leftoverlem}, $B$ is a differential operator on $(\ol M,E)$. Since 
$s\mapsto \Tr \big(A |\DD_T|^{-s}\big)$ has a meromorphic extension on $\C$ with only 
simple poles in $\set{d+n-k \ : k\in \N}$, when $A$ is a differential operator of order $n$ on $(\ol M,E)$ 
(see for instance \cite[Theorem 1.12.2]{Gilkey}), we can conclude that $\big(\A_{\DD_T} , L^2(\ol M,E),\DD_T\big)$ 
is simple with dimension spectrum included in $d-\N$. Moreover, if $A$ is a selfadjoint one-form, 
$\Tr e^{-t (\DD_T+A)^2}$ has a complete asymptotic expansion in real powers of $t$ when $t\to 0$ 
\cite[Theorems 1.4.5 and 1.4.11]{Gilkey2}.

Note that the first order condition $[da,b]=0$ for any $a,b\in \A_{\DD_T}$ is clearly satisfied since $da$ is a 
differential operator of order 0 on $(\ol M,E)$.

It only remains to prove that there is conjugation operator (Definition \ref{charge}) on the simple spectral triple 
$\big(\A_{\DD_T} , L^2(\ol M,E),\DD_T\big)$. 
We define $J':= J$ if $d/2$ is even, and $J':=\wt J$ if $d/2$ is odd.

The operator $J'_+:= r^+ J' e^+$, is an endomorphism on 
$L^2(\ol M,E)$. Clearly, $J'_+$ is an antilinear isometry satisfying $J'_+ a {J'_+}^{-1} = a^*$
for any $a\in \A_{\DD_T}$. By Lemma \ref{leftoverlem}, $L(J',\DD)=L(\DD,J')=0$ and
thus, $J'_+ \DD_+ = (J'\DD)_+ = (-1)^{d/2}(\DD J')_+ = (-1)^{d/2}\DD_+ J'_+$. 

Moreover $J'_{|N}\, S_{|N}\,{J'^{-1}}_{|N}=S_{|N}$ and thus   
$$
T J'_+ = S_{|N} \ga_0 J'_+ = S_{|N} J'_{|N} \ga_0 = J'_{|N}  S_{|N}\ga_0 = J'_{|N} T.
$$
In particular, $J'_+$ preserves $\Dom \, \DD_T$, and thus $J'_+ \DD_T = (-1)^{d/2}\DD_T J'_+$. As a consequence, 
$J'_+$ is a conjugation operator on 
$\big(\A_{\DD_T} , L^2(\ol M,E),\DD_T\big)$. 
Proposition \ref{propspecaction} now yields the result.
\end{proof}

\section*{Acknowledgments}

\hspace{\parindent}
We thank Alain Connes, Ali Chamseddine, Gerd Grubb, Ryszard Nest and Uuye Otgonbayar for helpful 
discussions.

\end{document}